\newcommand{\acco}[1]{\left\{#1\right\}}
\newcommand{\pa}[1]{\left(#1\right)}
\newcommand{\incl}{\subseteq}
\newcommand{\joliN}{\mathcal{N}}
\newcommand{\f}{\mathfrak{f}}
\newcommand{\g}{\mathfrak{g}}
\newcommand{\s}{\mathfrak{s}}
\newcommand{\N}{\mathbb{N}}
\newcommand{\R}{\mathbb{R}}
\newcommand{\Z}{\mathbb{Z}}
\newcommand{\id}{\operatorname{id}}
\newcommand\sac{\sqsubseteq}
\newcommand{\simu}{\preccurlyeq}
\newcommand\grp[2]{{#1}^{<#2>}}
\newcommand\bloc[1]{b_{#1}}
\newcommand\debloc[1]{b^{-1}_{#1}}
\newcommand{\Ac}{F} 
\newcommand{\Ad}{G}
\newcommand{\globA}{F} 
\newcommand{\globAd}{G} 
\newcommand\ZZ{\Z}
\newcommand{\Alphh}{\mathcal{Q}}
\newcommand{\autobis}[1]{$\operatorname{Spot}[#1]$}
\newcommand{\joliM}{\mathscr{M}}
\newcommand{\ord}{\operatorname{ord}}
\newcommand{\Id}{\mathbbm{1}}
\newcommand\mir{M}
\newtheorem{defi}{Definition}
\newtheorem{thm}{Theorem}
\newtheorem{prop}{Proposition}
\newtheorem{ppty}{Property}
\newtheorem{lemma}{Lemma}
\newtheorem{cor}{Corollary}
\author{Vincent Nesme\addressmark{1}
  \and Guillaume Theyssier\addressmark{2}\thanks{Research partially supported by project ANR EMC NT09 555297 (French national research agency)}}
\address{\addressmark{1}Freie Universit\"at Berlin\\
  \addressmark{2}LAMA (CNRS, Universit\'e de Savoie),\newline Campus Scientifique 73376 Le Bourget-du-Lac Cedex, France}
\title[Selfsimilarity, Simulation and Spacetime Symmetries]{Selfsimilarity, Simulation and Spacetime Symmetries}
\keywords{cellular automata, simulation, reversibility, time symmetry, space symmetry, linear}
\begin{document}

\maketitle

 \begin{abstract} 
   We study intrinsic simulations between cellular automata and introduce a new necessary condition for a CA to simulate another one. Although expressed for general CA, this condition is targeted towards surjective CA and especially linear ones. Following the approach introduced by the first author in an earlier paper, we develop proof techniques to tell whether some linear CA can simulate another linear CA. Besides rigorous proofs, the necessary condition for the simulation to occur can be heuristically checked via simple observations of typical space-time diagrams generated from finite configurations. As an illustration, we give an example of linear reversible CA which cannot simulate the identity and which is 'time-asymmetric', i.e. which can neither simulate its own inverse, nor the mirror of its own inverse.
  \end{abstract}

\section{Introduction and definitions}

Cellular automata (CA) are well-known for the variety of behaviors they can exhibit. A lot of classification schemes where proposed in the literature, trying to make this variety of behaviors more intelligible \cite{Wolfram:1984:CTCA,Gilman:1987:CLA,Kurka97}. Such classifications usually consist in a (finite) list of distinctive properties giving raise to a partition of the class of all CA. Another approach consists in defining a simulation relation between CA, and studying the ordered structure induced by the simulation. We follow this latter approach, and more precisely the simulation relation $\simu_i$ defined in \cite{bulking1,bulking2} giving rise to the notion of intrinsic universality \cite{surveyOllinger}. The intuition behind this simulation relation is simple: a CA is simulated by another if some rescaling of the first is a sub-automaton of a rescaling of the second.

More formally, we restrict ourselves to dimension 1 and the definition
is as follows.  A CA $\Ac$ is a \emph{sub-automaton} of a CA $\Ad$,
denoted ${\Ac\sac\Ad}$, if there is an injective map $\varphi$ from
$A$ to $B$ (state sets of $\Ac$ and $\Ad$ respectively) such that
${\overline{\varphi}\circ\globA=\globAd\circ \overline{\varphi}}$,
where ${\overline{\varphi}:A^\ZZ\rightarrow B^\ZZ}$ denotes the
uniform extension of $\varphi$ to configurations.  We sometimes write
${\Ac\sac_{\varphi}\Ad}$ to make $\varphi$ explicit. This definition
is standard but yields a very limited notion of simulation: a given
CA can only admit a finite set of (non-isomorphic) CA as sub-automata.
Therefore, following works of J.~Mazoyer, I.~Rapaport and
N.~Ollinger \cite{rap,ollingerphd,bulking1,bulking2}, we will add rescaling
operations to the notion of simulation. The ingredients of rescaling
operations are simple: packing cells into blocks, iterating the rule
and composing with a translation (formally, we use shift CA
$\sigma_z$, $z\in\ZZ$, whose global rule is given by $\sigma_z (c)_x =
c_{x-z}$ for all $x\in\ZZ$).  Given any state set $\Alphh$ and any $m\geq
1$, we define the bijective packing map ${\bloc{m}: \Alphh^\ZZ\rightarrow
  \bigl(\Alphh^m\bigr)^\ZZ}$ by:
\[\forall z\in\ZZ : \bigl(\bloc{m}(c)\bigr)(z) = \bigl(c(mz),\ldots,c(mz+m-1)\bigr)\]
for all ${c\in \Alphh^\ZZ}$. The rescaling $\grp{\Ac}{m,t,z}$ of $\Ac$ by
parameters $m$ (packing), ${t\geq 1}$ (iterating) and ${z\in\ZZ}$
(shifting) is the CA of state set $\Alphh^m$ and global rule:
\[\bloc{m} \circ \sigma_z\circ \globA^t \circ \debloc{m}.\]
With these definitions, we say that $\Ac$ simulates $\Ad$, denoted
${\Ad\simu\Ac}$, if there are rescaling parameters $m_1$, $m_2$,
$t_1$, $t_2$, $z_1$ and $z_2$ such that
${\grp{\Ad}{m_1,t_1,z_1}\sac\grp{\Ac}{m_2,t_2,z_2}}$.  

Determining whether some given CA simulates another given CA is hard (undecidable in general \cite[section 4.3]{bulking2}). For instance, looking at typical space time diagrams of two CA gives no clue on whether one simulates another, because the simulation can occur on a set of configurations of measure 0. Despite the general undecidability of the simulation relation, one can still hope to better understand its restriction to some specific classes of CA. For instance, the simulation relation is fully understood on products of shifts \cite[theorem 3.4]{bulking2} thanks to a 'characteristic sequence' which is essentially the sequence of ratio of translation vectors. Hence, if ${F=\sigma_0\times\sigma_1\times\sigma_3}$, one can prove that $F$ cannot simulate $F^{-1}=\sigma_0\times\sigma_{-1}\times\sigma_{-3}$ because they do not have the same characteristic sequence. 

In this paper, we introduce a general necessary condition for a simulation between two CA to be possible. It focuses on surjective CA, but we will essentially use it on linear reversible CA. This condition is expressed as a characteristic set $\chi$ of points of the real half-plane which is decreasing w.r.t. $\simu$ (Theorem~\ref{thm:simu} below): \[F\simu G\Rightarrow \chi(G)\subseteq\chi(F).\] A striking property of $\chi$ is that it can be somewhat visualized on typical space time diagrams of linear CA. Moreover, the set $\chi$ is closely related to so-called 'Green functions' of linear CA for which systematic analysis techniques have been developed in \cite{fractal}. Hence, formal proofs of impossibility of simulation between two linear CA can be derived from heuristic observations of space-time diagrams in a quasi-automatic way. 

The set of reversible CA is somewhat structured with respect to $\simu$ since it possesses a maximal element (\textit{i.e.,} a reversible universal CA \cite[theorem 4.5]{bulking2}) and verifies the following \cite[theorem 4.4]{bulking2}:
\[F\simu G \Rightarrow F^{-1}\simu G^{-1}\]
Therefore, a reversible CA is either $\simu$-equivalent to its inverse, or $\simu$-incomparable to it. The most complex reversible CA, reversible universal CA, are all $\simu$-equivalent to their own inverse. Coming back to the example $F$ above (product of shifts), we have that $F$ and $F^{-1}$ are $\simu$-incomparable.  Following~\cite{block}, let us associate to every reversible CA $F$ its dual $\tilde F = \mir\circ F^{-1}\circ \mir$, where $\mir$ is the mirror transformation on configurations (${\mir(c)_z=c_{-z}}$).  Any product of shifts is self-dual, and generally speaking it seems to be hard to come up with CA that do not simulate their dual, while non-time-symmetric CA in the sense of \cite{gamo} come in profusion.

An interesting question in this context is how different a reversible CA can be from its dual.  As an illustration of the necessary condition for simulation between CA that is given by Theorem~\ref{thm:simu}, we study in section~\ref{sec:linear} some reversible linear CA.  The first one simulates its inverse, its mirror, its dual, but not the identity;  the second one simulates neither the identity nor its inverse or its mirror image or its dual.

\section{Simulation and geometry}
\label{sec:condition}

The basic ingredient in this section is the collection of functions telling how a change of value of the center cell in the initial configuration will affect some other cell's value at some step in the future. Such functions are often studied for linear cellular automata (see section~\ref{sec:linear}) and are sometimes called 'Green functions' in this context \cite{moore1998}.

Let $\Ac$ be any CA and fix some $x\in\Z$ and some $y\in\N$. For any configuration $c\in \Alphh^\Z$ and any $q\in \Alphh$, we denote by $\phi_c(q)$ the following configuration:
\[\phi_c(q)_z =
\begin{cases}
  q &\text{ if }z=0,\\
  c(z) &\text{ else.}
\end{cases}\]
We then denote by $\Ac_{x,c}^y: \Alphh\rightarrow \Alphh$ the map ${q\mapsto \left(\Ac^y(\phi_c(q))\right)_x}$.

For instance, if $F$ is simply the identity, ${\globA}^{y}_{x,c}$ is the identity when $x=0$, otherwise it is the constant function $q\mapsto c(x)$.  For a less trivial example, consider the cas where $F=\bigoplus$ is the sum with neighborhood $\acco{0,-1}$ over $\Alphh=\Z/2\Z$, i.e. $\bigoplus(c)_x=c(x)+c(x-1)$.  Starting from a single nonzero cell, iterations of this automaton generate Pascal's triangle modulo $2$.  For $x\in\N$, let $x=\sum_{n\in\N} b_x(n)2^{n}$, with $b_x(n)\in\acco{0,1}$, be its binary representation, and $B_x=\acco{n\in\N | b_x(n)=1}$.  Then 

$$\bigoplus\nolimits^y_{x,c}(q)=\left\{\begin{array}{ll}\bigoplus\nolimits^y(c)_x+q & \text{if $x\geq 0$ and $B_x\incl B_y$}\\ \bigoplus\nolimits^y(c)_x &\text{else}\end{array}\right..$$

We are interested in positions in space-time where the influence of the center cell is concentrated, whatever the initial configuration (see figure~\ref{fig:spot}).

\begin{defi}
$\Ac$ has the property \autobis{x,y,l,r} for $x\in\Z$ and $y,l,r\in\N$ if

\begin{itemize}
\item ${\globA}^{y}_{x,c}$ is a bijection for all configurations $c$; and
\item ${\globA}^{y}_{z,c}$ is a constant function for all $c$ and all ${z\in [x-l;x+r]\setminus \acco{x}}$.
\end{itemize}
\end{defi}

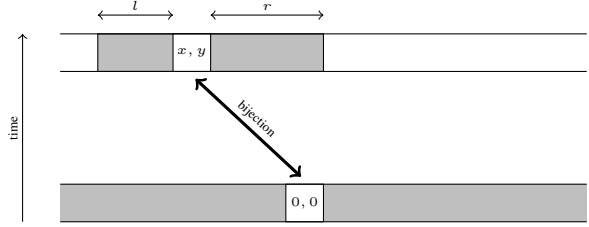
\begin{figure}[htbp]
  \tiny
  \centering
  \begin{tikzpicture}
    \fill[fill=gray!50!white] (-3,0)--(4,0)--(4,.5)--(-3,.5)--cycle;
    \draw[fill=white] (0,0)--(.5,0)--(.5,.5)--(0,.5)--cycle;
    \draw (.25,.25) node (origin) {$0,0$};
    \draw (-3,0)--(4,0);
    \draw (-3,.5)--(4,.5);
    \draw[fill=gray!50!white] (-2.5,2)--+(3,0)--+(3,.5)--+(0,.5)--cycle;
    \draw[fill=white] (-1.5,2)--+(.5,0)--+(.5,.5)--+(0,.5)--cycle;
    \draw[<->] (-2.5,2.75)--+(1,0) node[midway,above] {$l$};
    \draw[<->] (-1,2.75)--+(1.5,0) node[midway,above] {$r$};
    \draw (-1.25,2.25) node (dest) {$x,y$};
    \draw (-3,2)--(4,2);
    \draw (-3,2.5)--(4,2.5);
    \draw[very thick,<->] (.2,.6)--(-1.2,1.9) node [midway,sloped,above] {bijection};
    \draw[->] (-3.5,0)--+(0,2.5) node[midway,sloped,above] {time};
  \end{tikzpicture}
  \caption{Property \autobis{x,y,l,r}. Gray zones correspond to cells whose state does not change (either fixed in the initial configuration or kept constant in the $y$-th iteration of the CA).}
  \label{fig:spot}
\end{figure}

$\bigoplus$ thus fulfills \autobis{x,y,l,r} if and only if, for any $z\in [x-l;x+r]$, $B_z\incl B_y$ is equivalent to $x=z$ (considering that ``$B_z\incl B_y$'' is a false statement when $B_z$ is undefined).

\begin{lemma}\label{lemma:motsfinis}
If $F$ has the property \autobis{x,y,l,r}, then $F^y(\Alphh^\Z)$ contains all the words of size $\max(l,r)+1$.
\end{lemma}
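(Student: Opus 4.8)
The plan is to realize an arbitrary target word as a factor of an image configuration by perturbing input cells one at a time, using the bijection at the distinguished position to \emph{write} each output symbol and the surrounding band of constant positions to \emph{protect} what has already been written. Assume without loss of generality that $\max(l,r)=r$ (the case $\max(l,r)=l$ is symmetric, obtained by reflecting the whole argument). Let $w=w_0w_1\cdots w_r\in\Alphh^{r+1}$ be an arbitrary word; the goal is to produce a configuration $d$ with $(F^y(d))_{x+j}=w_j$ for every $j\in\{0,\ldots,r\}$, which exhibits $w$ as a factor of $F^y(d)\in F^y(\Alphh^\Z)$.

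First I would record the shift-invariant form of the hypothesis. Since $F$ is a CA and hence commutes with the shift, property \autobis{x,y,l,r} relocated to an arbitrary center $p$ reads: for every background configuration $c$, the map sending the single cell at position $p$ to $q$ and reading position $x+p$ after $y$ steps is a bijection, while reading any position in $[x+p-l;x+p+r]\setminus\{x+p\}$ yields a value independent of $q$. This is just the original property composed with translations, and crucially it holds for \emph{all} backgrounds $c$, which is what licenses its repeated use as the background evolves.

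Then I would build $d$ by a finite downward induction on $j$ from $r$ to $0$. Starting from any configuration $e_{r+1}$, define $e_j$ to be $e_{j+1}$ with its cell at position $j$ replaced by the unique value $q_j$ for which $(F^y(e_j))_{x+j}=w_j$; such a $q_j$ exists by the bijection clause applied at center $j$ with background $e_{j+1}$. The key verification is that passing from $e_{j+1}$ to $e_j$ alters only the input cell $j$, so by the constancy clause at center $j$ it leaves every output position in $[x+j-l;x+j+r]\setminus\{x+j\}$ untouched. The already-written positions $x+j+1,\ldots,x+r$ all lie in $[x+j+1;x+j+r]\subseteq[x+j-l;x+j+r]$, hence keep the values $w_{j+1},\ldots,w_r$ fixed at earlier steps. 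By induction $(F^y(e_0))_{x+j}=w_j$ for all $j$, and $d:=e_0$ works.

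The only real point to get right, and the step I expect to be the main obstacle, is the direction of the induction: each one-cell modification is allowed to corrupt output cells to its left (those are not yet written and will be overwritten later), but it must preserve the output cells to its right, which is exactly guaranteed because the protected band $[x+j;x+j+r]$ reaches $r$ cells to the right of the distinguished position. This is why the right reach $r$ controls length $r+1$, and symmetrically the left reach $l$ controls length $l+1$; whichever is larger yields all words up to length $\max(l,r)+1$. Everything else is bookkeeping with the interval inclusions above.
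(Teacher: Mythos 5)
Your proof is correct and takes essentially the same route as the paper's: the same reduction without loss of generality to $\max(l,r)=r$, followed by the same right-to-left induction in which the bijection clause writes each target symbol and the constancy clause (the band of width $r$ to the right of the written position) protects the symbols already written. The only differences are cosmetic — your output window is $[x;x+r]$ where the paper uses $[0;r]$, and you make explicit the shift-invariance and the ``for all backgrounds'' point that the paper leaves implicit.
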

\begin{proof}	
Let us suppose, without loss of generality, that $l$ is no larger than $r$. 
Let $\bar q=(q_0,\ldots,q_r)\in \Alphh^{r+1}$.  We are going to construct $c\in \Alphh^\Z$ such that $F^y(c)_{0,\ldots,r}=\bar q$.  Start with an arbitrary $c\in \Alphh^\Z$.  We can first modify $c_{r-x}$ in such way that $F^y(c)_r=q_r$; then we can change $c_{r-x-1}$, on which $F^y(c)_r$ does not depend, so that $F^y(c)_{r-1}=q_{r-1}$; and so on, until we choose $c_{-x}$, on which $F(c)_{1,\ldots,r}$ does not depend, so that $F(c)_{0}=q_0$.  
\end{proof}

For instance, $\bigoplus$ fulfills \autobis{0,1,+\infty,0}, which implies that it must be surjective.

\begin{lemma}\label{lemma:xx'}
Let $\joliN^{y}$ be the neighborhood of $\Ac^{y}$.  If $\Ac$ fulfills \autobis{x,y,l,r} and \autobis{x',y',l',r'} with $[x'-l';x'+r']+\joliN^{y'}\incl [-l;r]$, then \autobis{x+x',y+y',l',r'} also holds.  
\end{lemma}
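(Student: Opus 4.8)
The plan is to decompose $F^{y+y'}=F^{y'}\circ F^{y}$ and to follow, stage by stage, how a change of the center cell propagates. Fix an arbitrary configuration $c\in\Alphh^\Z$; for $q\in\Alphh$ write $d=F^{y}(\phi_c(q))$, a configuration whose value at a position $z$ is $d_z={\globA}^{y}_{z,c}(q)$, so that ${\globA}^{y+y'}_{z,c}(q)=\bigl(F^{y'}(d)\bigr)_{z}$. I want to establish, for this arbitrary $c$, the bijectivity of ${\globA}^{y+y'}_{x+x',c}$ and the constancy of ${\globA}^{y+y'}_{x+z',c}$ for every $z'\in[x'-l';x'+r']\setminus\acco{x'}$.

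The first step is to read the window $[x-l;x+r]$ through \autobis{x,y,l,r}: its constancy clause says that $d_z$ does not depend on $q$ for $z\in[x-l;x+r]\setminus\acco{x}$, while its bijectivity clause says that $q\mapsto d_x$ is a bijection of $\Alphh$. Thus, on the whole first window, the $q$-dependence of $d$ is concentrated at the single cell $x$, where it is bijective. The second step localizes the target. For $z'\in[x'-l';x'+r']$, the value $\bigl(F^{y'}(d)\bigr)_{x+z'}$ depends on $d$ only through the cells $(x+z')+\joliN^{y'}$, which are the translate by $x$ of $z'+\joliN^{y'}$ and therefore lie in $[x-l;x+r]$ by the hypothesis $[x'-l';x'+r']+\joliN^{y'}\incl[-l;r]$. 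Hence this value depends on $q$ only through $d_x$: fixing a reference $q_0$, setting $\hat d=F^{y}(\phi_c(q_0))$ and letting $\psi(r)$ be $\hat d$ with its cell $x$ reset to $r$, the configurations $d$ and $\psi(d_x)$ agree on $(x+z')+\joliN^{y'}$ (they coincide at $x$ by construction, and elsewhere in the window by the constancy clause), so $\bigl(F^{y'}(d)\bigr)_{x+z'}=\bigl(F^{y'}(\psi(d_x))\bigr)_{x+z'}$.

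It then remains to interpret $\psi$ through \autobis{x',y',l',r'}, which I would do by shift-invariance. Writing $\psi(r)=\sigma_x\bigl(\phi_{c'}(r)\bigr)$ with $c'=\sigma_{-x}(\hat d)$ and using $F^{y'}\circ\sigma_x=\sigma_x\circ F^{y'}$, one gets $\bigl(F^{y'}(\psi(r))\bigr)_{x+z'}=\bigl(F^{y'}(\phi_{c'}(r))\bigr)_{z'}={\globA}^{y'}_{z',c'}(r)$. By \autobis{x',y',l',r'} this is a bijection of $r$ when $z'=x'$ and is constant when $z'\in[x'-l';x'+r']\setminus\acco{x'}$. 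Precomposing with the bijection $q\mapsto d_x$ yields that ${\globA}^{y+y'}_{x+x',c}$ is a bijection and that ${\globA}^{y+y'}_{x+z',c}$ is constant for every such $z'$. Since $[x+x'-l';x+x'+r']\setminus\acco{x+x'}$ is exactly $\bigl\{x+z':z'\in[x'-l';x'+r']\setminus\acco{x'}\bigr\}$ and $c$ was arbitrary, this is precisely \autobis{x+x',y+y',l',r'}.

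The one delicate point is the localization step: everything hinges on checking that the neighbourhood inclusion really confines all cells influencing the target $x+z'$ to the first window $[x-l;x+r]$, so that the (possibly wild) $q$-dependence of $d$ outside that window is irrelevant and the composition reduces to two bijections. The rest is bookkeeping with the constancy clauses and the commutation of $F^{y'}$ with the shift.
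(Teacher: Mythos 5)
Your proof is correct and follows essentially the same route as the paper's: decompose $F^{y+y'}=F^{y'}\circ F^{y}$, use the neighborhood inclusion to confine all $q$-dependence of the target window to the single cell $x$ of the intermediate configuration $F^{y}(\phi_c(q))$, and then invoke the two clauses of \autobis{x',y',l',r'} via shift invariance, composing with the bijection $q\mapsto F^{y}_{x,c}(q)$. The only difference is one of detail: you make explicit (reference configuration, the map $\psi$, composition of bijections) what the paper's terser proof leaves implicit.
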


\begin{proof}
By definition of the neighborhood, $[x'-l';x'+r']+\joliN^{y'}\incl [-l;r]$ implies that ${\globA}^{y'}(c)_{[x'-l';x'+r']}$ is a function of $c_{[-l;r]}$.  Applying that to $c=\sigma_{-x}\circ {\globA}^{y}(d)$, we get that the restriction of ${\globA}^{y+y'}(d)$ to ${[x+x'-l';x+x'+r']}$ is a function of ${\globA}^{y}(d)_{[x-l;x+r]}$.  Thus we get:
\begin{itemize}
\item ${\globA}^{y+y'}(d)_{[x+x'-l';x+x'+r']\setminus\acco{x+x'}}$ does not depend on $d_0$;
\item ${\globA}^{y+y'}(d)_{x+x'}$ depends only on ${\globA}^{y}(d)_{[x-l;x+r]}$, which in turn, according to \autobis{x,y,l,r}, depends injectively on $d_0$.
\end{itemize}
\end{proof}

The central idea of the paper is to study the set of parameters $(x,y,l,r)$ for which the property \autobis{x,y,l,r} holds, and use that set to obtain necessary conditions for simulations between cellular automata. However, we won't use the set of parameters directly because the simulation relation is invariant by space-time rescalings and this set is not. Instead we will look at 'scale-free' structures inside this set of parameters. More precisely, given some integer $p$, we look for infinite geometric progressions of order $p$ in the set of parameters. Hence we obtain a kind of fingerprint for each CA which is well-behaved with respect to space-time transformations involved in the simulation relation (Theorem~\ref{thm:simu} below). Moreover, as shown by examples developed latter in this paper, this fingerprint is closely related to the self-similar structure observed in typical space-time of some linear CA.
Technically, this is how the definition goes.


\begin{defi}
For a CA $\Ac$ and an integer $p\geq 2$, we denote ${X_p(\Ac)}$ the set of points ${(x,y)\in\R\times [0;+\infty)}$ such that for some $k\in\N$, for every large enough $n\in\N$, $\Ac$ fulfills \autobis{x p^n,y p^n, p^{n-k}, p^{n-k}}.
\end{defi}

$X_2(\bigoplus)$ is for instance the set of points $(x,y)\in\R\times [0;+\infty)$ that can be written $x=\frac{a}{2^n}$ and $y=\frac{b}{2^n}$ with $a,b\in\N$ and $B_a\incl B_b$: its restriction to $\R\times [0;1]$ is the dyadic part of a (shifted) Sierpi\'nski triangle.

It can be noted that $X_p$ is necessarily of measure $0$, and is self-similar, since by Lemma~\ref{lemma:xx'} every point of $X_p$ is the tip of a small copy of $X_p$ within itself.  One can also notice that if $F$ is not surjective, then $X_p(F)$ is reduced to the singleton $\acco{(0,0)}$.  Indeed, if $X_p(F)$ is not reduced to a singleton, then according to Lemma~\ref{lemma:motsfinis}, the image of $F$ contains every finite word, which implies, by compactness, that $F$ is surjective.



We now detail, in a series of properties, how $X_p$ is modified under the action of the transformations involved in the simulation of a CA by another.  First, the shift.  Let $\s_z$ be the transformation of the plane defined by $\s_z(x,y)=\pa{x+zy,y}$.  The following property is obvious, by definition of $X_p$.

\begin{ppty}\label{ppte:shift}
$X_p(\sigma_z\circ\Ac)=\s_z\pa{X_p(\Ac)}$.
\end{ppty}

Let us now consider iteration and grouping.  Let $\g_t$ be the transformation of the plane defined by $\g_t(x,y)=\pa{x,\frac{y}{t}}$: notice that $\g_p\pa{X_p(\Ac)}=X_p(\Ac)$.  Let $\f_m$ be the transformation of the plane defined by $f_m(x,y)=\pa{\frac{x}{m},y}$.



\begin{ppty}\label{ppte:denseiter}
$X_p(\Ac^t)$ is a dense subset of $\g_t\pa{X_p(\Ac)}$. 
\end{ppty}
\begin{proof}
The inclusion is immediate from the definition.  What might be slightly less immediate is why these sets are not obviously equal.  Given the definition, $X_p(\Ac)$ must be included in $\R_p^2$, where $\R_p$ is the set of real numbers having finite $p$-adic expansion.  Actually, we do have $\g_t \pa{X_p(\Ac)}\cap \R^2_p= X_p(\Ac^t)$, so the equality without the intersection is certainly true if $t\in\R_p$, not quite so in general. Let us now prove the density.

Let $(x,y)\in X_p(\Ac)$.  We want to find a sequence $(x_n,y_n)$ of points of $X_p(\Ac)$ converging to $(x,y)$ such that for all $n$, $y_n\in t\R_p$.  For a finite sequence of integers $0=i_{n,0}<i_{n,1}<\ldots<i_{n,l}$ ($l$ is a constant independent of $n$ to be fixed later), we define $\eta_n=\sum\limits_{j=0}^{l} p^{-i_{n,j}}$ and $(x_n,y_n)=\eta_n (x,y)$.  We have three requirements:

\begin{itemize}
\item $(x_n,y_n)$ must converge to $(x,y)$: it is sufficient to have $\lim\limits_{n\to+\infty}i_{n,1}=+\infty$
\item $(x_n,y_n)$ must be an element of $X_p(\Ac)$.   This is guaranteed as long as $i_{n,j+1}-i_{n,j}$ is always large enough.  More precisely, by definition of $X_p$ there exists some $k$ such that for every large enough integer $n$, $\Ac$ fulfills \autobis{xp^n,yp^n,p^{n-k},p^{n-k}}.  Therefore, if $i_{n,l}-i_{n,l-1}$ is large enough (depending on $k$ and the neighborhood of $F$), we get from Lemma~\ref{lemma:xx'} that $(1+p^{-i_{n,l}+i_{n,l-1}})(x,y)$ is in $X_p(\Ac)$.  By recursion on $l$, we get ultimately $(x_n,y_n)\in X_p(\Ac)$.
\item $y_n$ must be in $t\R_p$, which means the integer $p^{i_{n,l}}\sum\limits_{j=0}^{l} p^{-i_{n,j}}$ must be a multiple of $t$.
\end{itemize}

So, it all boils down to finding increasing integer sequences $0=i_0<i_1<\cdots<i_{l}$ where $i_{j+1}-i_j$ is arbitrary large, and such that $t$ divides $p^{i_{l}}\sum\limits_{j=0}^{l} p^{-i_{j}}$.  That is clearly possible: the sequence of powers of $p$ is ultimately periodic modulo $t$, so if we choose the $i_j$-s spaced by multiples of this period and $l=t$, we can easily meet the conditions.
\end{proof}

\begin{ppty}\label{ppte:densegp}
  $X_p(b_m\circ \Ac\circ b_{m}^{-1})$ is a dense subset of $\f_m\pa{X_p(\Ac)}$.
\end{ppty}
\begin{proof}
  Let $\Ad=b_m\circ \Ac\circ b_{m}^{-1}$. Let $x\in\Z$ and $y,l,r\in\N$ with $l\geq 1$ and $r\geq 1$.
  First, it follows from definitions that, for any configuration $c$ of $\Ac$, $\Ad_{x,b_m(c)}^y$ is constant if and only if, for all ${z\in\{mx-m+1,\cdots,mx+m-1\}}$, $\Ac_{z,c}^y$ is constant. Moreover $\Ad_{x,b_m(c)}^y$ bijective implies $\Ac_{mx,c}^y$ bijective. This shows that if $\Ad$ has property \autobis{x,y,l,r} then $\Ac$ has property \autobis{mx,y,ml,mr}.

  Now suppose that $\Ac$ has property \autobis{mx,y,ml,mr} and fix some configuration $c$ of $F$. Then it is straightforward to check that $\Ad_{x,b_m(c)}^y$ is bijective (because it sends each component of $\Alphh^m$ on itself) and $\Ad_{z,b_m(c)}^y$  is constant for any $z\in[x-l;x+r]\setminus\acco{x}$.
  
We have shown that $\Ac$ has property \autobis{mx,y,ml,mr} if and only if $\Ad$ has property \autobis{x,y,l,r}. Thus we have
\[(x,y)\in X_p(\Ad)\iff (mx,y)\in X_p(\Ac).\]
This implies ${X_p(\Ad)\subseteq\f_m\pa{X_p(\Ac)}}$. To prove the density, it is sufficient to prove that ${X_p(\Ac)\cap m\R_p}$ is dense in $X_p(\Ac)$ which can be done using the same argument as in the proof of property~\ref{ppte:denseiter}.
\end{proof}

It only remains to consider the case of the sub-automaton. 

\begin{ppty}\label{ppte:sub}
If $\Ad\sac\Ac$ then $X_p(\Ac)\incl X_p(\Ad)$.
\end{ppty}
\begin{proof}
  It is straightforward to check that if $\Ac$ has property
  \autobis{x,y,l,r} then so does $\Ad$.  The property follows.
\end{proof}

Properties \ref{ppte:shift}, \ref{ppte:denseiter}, \ref{ppte:densegp} and \ref{ppte:sub} prove the following theorem. 

\begin{thm}
\label{thm:simu}
If $\Ac$ simulates $\Ad$, then there exist rational numbers $\beta$ and $\alpha,\gamma>0$ such that for every integer $p\geq 2$, $\pi_{\alpha,\beta,\gamma}(\overline{X_p(\Ac)})\incl\overline{X_p(\Ad)}$, where $\pi_{\alpha,\beta,\gamma}(x,y)=(\alpha x+\beta y, \gamma y)$.
\end{thm}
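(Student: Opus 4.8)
The plan is to unfold the hypothesis $\Ad\simu\Ac$ into its definition and track how $X_p$ is transformed under each elementary operation that makes up a rescaling, using Properties~\ref{ppte:shift}, \ref{ppte:denseiter}, \ref{ppte:densegp} and~\ref{ppte:sub} as black boxes. Recall that $\grp{\Ac}{m,t,z}$ has global rule $\bloc{m}\circ\sigma_z\circ\globA^t\circ\debloc{m}$, which I read as the conjugate by $\bloc{m}$ of the CA $\sigma_z\circ\Ac^t$. So I would first compute $\overline{X_p(\grp{\Ac}{m,t,z})}$ by composing, from the inside out, Property~\ref{ppte:denseiter} (applied to $\Ac^t$), then Property~\ref{ppte:shift} (applied to $\sigma_z\circ\Ac^t$), then Property~\ref{ppte:densegp} (applied to the conjugation by $\bloc{m}$). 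Each of $\s_z$, $\g_t$, $\f_m$ is an affine bijection of the plane, hence a homeomorphism, so it commutes with closure and carries dense subsets to dense subsets; chaining the three and using transitivity of density, I obtain that $X_p(\grp{\Ac}{m,t,z})$ is dense in $\f_m(\s_z(\g_t(X_p(\Ac))))$, and therefore $\overline{X_p(\grp{\Ac}{m,t,z})}=\pi(\overline{X_p(\Ac)})$, where $\pi=\f_m\circ\s_z\circ\g_t$ is the affine map $\pi(x,y)=(\tfrac{x}{m}+\tfrac{z}{tm}\,y,\tfrac{y}{t})$.

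Next I would invoke the definition of simulation: there exist parameters with $\grp{\Ad}{m_1,t_1,z_1}\sac\grp{\Ac}{m_2,t_2,z_2}$. Property~\ref{ppte:sub} yields $X_p(\grp{\Ac}{m_2,t_2,z_2})\incl X_p(\grp{\Ad}{m_1,t_1,z_1})$, and since taking closures preserves inclusion, the previous paragraph turns this into $\pi_2(\overline{X_p(\Ac)})\incl\pi_1(\overline{X_p(\Ad)})$, writing $\pi_i$ for the affine map attached to the $i$-th rescaling. Applying the affine invertible map $\pi_1^{-1}$ to both sides gives $(\pi_1^{-1}\circ\pi_2)(\overline{X_p(\Ac)})\incl\overline{X_p(\Ad)}$, which is the sought inclusion provided $\pi_1^{-1}\circ\pi_2$ has the required shape.

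The last step is the routine computation of $\pi_1^{-1}\circ\pi_2$. From $\pi_1^{-1}(u,v)=(m_1u-z_1v,\,t_1v)$ one reads off $\pi_1^{-1}\circ\pi_2=\pi_{\alpha,\beta,\gamma}$ with $\alpha=m_1/m_2$, $\gamma=t_1/t_2$ and $\beta=(m_1z_2-z_1m_2)/(t_2m_2)$. These are rational numbers with $\alpha,\gamma>0$, and — this is the point that makes the statement uniform in $p$ — they depend only on the six rescaling parameters, not on $p$, so a single $\pi_{\alpha,\beta,\gamma}$ works for every $p\geq 2$.

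The one place that needs genuine care rather than bookkeeping is that Properties~\ref{ppte:denseiter} and~\ref{ppte:densegp} deliver only \emph{density}, not equality: $X_p(\Ac^t)$ is merely dense in $\g_t(X_p(\Ac))$, the obstruction being $p$-adic as explained there. Hence I cannot manipulate the sets $X_p$ directly and must pass to closures at every stage. The main thing to verify is therefore that density is stable along the whole chain: that each elementary transformation, being a homeomorphism of $\R^2$, satisfies $\overline{\phi(S)}=\phi(\overline{S})$ and sends dense subsets to dense subsets, and that density is transitive, so that the two successive approximations compose into a single density statement. Once this is in place, taking closures throughout absorbs all the $p$-adic defects, and the inclusion of Property~\ref{ppte:sub} propagates cleanly to the closures, which is exactly the form in which the theorem is stated.
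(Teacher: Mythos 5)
Your proposal is correct and takes essentially the same route as the paper: the paper's proof is the single remark that Properties~\ref{ppte:shift}, \ref{ppte:denseiter}, \ref{ppte:densegp} and~\ref{ppte:sub} prove the theorem, and your write-up is exactly that composition, with the density-versus-equality issue handled by passing to closures (legitimate since $\s_z$, $\g_t$, $\f_m$ are homeomorphisms) and the affine map computed explicitly. Your values $\alpha=m_1/m_2$, $\gamma=t_1/t_2$, $\beta=(m_1z_2-z_1m_2)/(t_2m_2)$ are what the paper's definitions yield, and as you note they depend only on the rescaling parameters, not on $p$, which gives the required uniformity.
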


The determination of $X_p$ is not easy in general, but the following basic facts can be established straightforwardly from the definitions:
\begin{itemize}
\item if $F$ is a shift, then $\overline{X_p(F)}$ is a line passing through the origin;
\item if $F$ is nilpotent (\textit{i.e.} $\exists t$ s.t. $F^t$ is a constant function), then $X_p(F)=\{(0,0)\}$;
\item ${X_p(F\times G) = X_p(F)\cap X_p(G)}$.
\end{itemize}

Theorem~\ref{thm:simu} above shows that $X_p(\Ac)$ represent obstructions for $\Ac$ to simulate other CA: the bigger $X_p(\Ac)$ is, the smaller the family of CA $\Ac$ can simulate. Using the basic facts above, we can give some concrete formulations of this intuition.

\begin{cor}Let $p\geq 2$ be an integer and $\Ac$ a CA. Then we have:
  \begin{itemize}
  \item If $\Ac$ simulates the identity, then $\overline{X_p(\Ac)}$ must be included in a line passing through the origin;
  \item If $\Ac$ is intrinsically universal, then $X_p(\Ac)=\{(0,0)\}$;
  \item If $\Ac$ is reversible universal (i.e. it can simulate any reversible CA), then $X_p(\Ac)=\{(0,0)\}$;
  \end{itemize}
\end{cor}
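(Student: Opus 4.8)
The plan is to feed three well-chosen simulated automata into Theorem~\ref{thm:simu} and to exploit that $\pi_{\alpha,\beta,\gamma}$ is a \emph{linear bijection}: since $\alpha,\gamma>0$, its determinant is $\alpha\gamma>0$, so it maps a singleton to a singleton and a line through the origin to a line through the origin, and likewise for preimages. Everything then reduces to computing $X_p$ of the right witness.

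First I would pin down $X_p(\id)$. As recalled in the excerpt, for the identity ${\globA}^{y}_{x,c}$ is the identity map when $x=0$ and a constant function otherwise; hence the bijectivity clause of \autobis{x',y',l,r} can hold only at $x'=0$. Consequently \autobis{xp^n,yp^n,p^{n-k},p^{n-k}} can hold for large $n$ only if $x=0$, so $X_p(\id)\incl\acco{0}\times[0;+\infty)$ and $\overline{X_p(\id)}$ is contained in the vertical line $\acco{x=0}$. For the first item I apply Theorem~\ref{thm:simu} with $G=\id$: the inclusion $\pi_{\alpha,\beta,\gamma}(\overline{X_p(\Ac)})\incl\overline{X_p(\id)}\incl\acco{x=0}$ means that every $(x,y)\in\overline{X_p(\Ac)}$ satisfies $\alpha x+\beta y=0$. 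Since $\alpha>0$ this equation defines a genuine (non-degenerate) line through the origin, which is exactly the claim.

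For the last two items I would use a product of two \emph{distinct} shifts as the hard-to-simulate witness. By Property~\ref{ppte:shift} applied to $\id=\sigma_0$, we have $X_p(\sigma_z)=\s_z\pa{X_p(\id)}$, which lies in the line $\acco{x=zy}$. Combining this with the product formula $X_p(F\times G)=X_p(F)\cap X_p(G)$ from the basic facts, for $G=\sigma_0\times\sigma_1$ we obtain
\[X_p(G)=X_p(\sigma_0)\cap X_p(\sigma_1)\incl\acco{x=0}\cap\acco{x=y}=\acco{(0,0)}.\]
Because $(0,0)$ lies in $X_p$ of every CA (the property \autobis{0,0,l,r} always holds), this gives $X_p(G)=\overline{X_p(G)}=\acco{(0,0)}$. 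If $\Ac$ is intrinsically universal it simulates $G$, so Theorem~\ref{thm:simu} yields $\pi_{\alpha,\beta,\gamma}(\overline{X_p(\Ac)})\incl\acco{(0,0)}$; applying the inverse of the bijection $\pi_{\alpha,\beta,\gamma}$ forces $\overline{X_p(\Ac)}=\acco{(0,0)}$, hence $X_p(\Ac)=\acco{(0,0)}$. The third item is then identical, once one observes that $\sigma_0\times\sigma_1$ is reversible (each factor $\sigma_z$ has inverse $\sigma_{-z}$), so a reversible universal $\Ac$ also simulates $G$.

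I expect no serious obstacle here, since all the substance is already carried by Theorem~\ref{thm:simu} and the basic facts; the proof is essentially a matter of choosing the right witnesses. The one genuinely creative step is using $\sigma_0\times\sigma_1$ rather than a single shift for the universality items: a single shift would only confine $\overline{X_p(\Ac)}$ to a line, whereas taking two shifts of different slopes collapses $X_p(G)$ to a single point, which is what the invertibility of $\pi_{\alpha,\beta,\gamma}$ then transfers to $\Ac$. The only routine care needed is checking that the relevant closures are unchanged (a finite set is closed) and that $\pi_{\alpha,\beta,\gamma}$ is a linear bijection, which is what legitimizes pulling a singleton back to a singleton and a line through the origin back to a line through the origin.
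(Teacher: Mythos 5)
Your proposal is correct, and its overall strategy --- feed a well-chosen simulated CA into Theorem~\ref{thm:simu} and compute that witness's $X_p$ from the basic facts --- is exactly the paper's. The only divergence is the witness for the second item: the paper invokes nilpotent CA (an intrinsically universal CA must simulate every nilpotent CA, and a nilpotent CA has $X_p=\{(0,0)\}$), whereas you reuse the product of two distinct shifts $\sigma_0\times\sigma_1$, which the paper reserves for the third item only (there in the guise of $\sigma\times\sigma^{-1}$). Your choice is a genuine, if small, simplification: it handles items 2 and 3 by one and the same argument, needs only the shift property and the product formula $X_p(F\times G)=X_p(F)\cap X_p(G)$ rather than the nilpotency fact, and the reversibility of the witness comes for free. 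You also make explicit two points the paper leaves implicit: that $(0,0)$ belongs to $X_p$ of every CA (so the conclusion is the equality $X_p(\Ac)=\{(0,0)\}$ and not merely an inclusion), and that $\pi_{\alpha,\beta,\gamma}$ is a linear bijection, which is what legitimizes pulling a singleton back to a singleton and the set $\{\alpha x+\beta y=0\}$ back to an honest line through the origin. Both proofs are one-liners once the basic facts are available; yours buys uniformity across the two universality items, while the paper's records the independently useful observation that nilpotent CA already obstruct intrinsic universality.
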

\begin{proof}
  All items use Theorem~\ref{thm:simu}.
  Item 1 and 2 are direct consequences of the computation of $X_p$ for the identity and nilpotent CA (an intrinsically universal CA must simulate any nilpotent CA). Item 3 uses the fact that a reversible universal CA must simulate $\sigma\times\sigma^{-1}$ whose $X_p$ is a singleton.
\end{proof}


The purpose of the next section is to focus on a class of CA that generally have more interesting $X_p$: linear cellular automata.

\section{Linear Cellular Automata}
\label{sec:linear}

\begin{figure}[htbp]
\includegraphics[width=\textwidth]{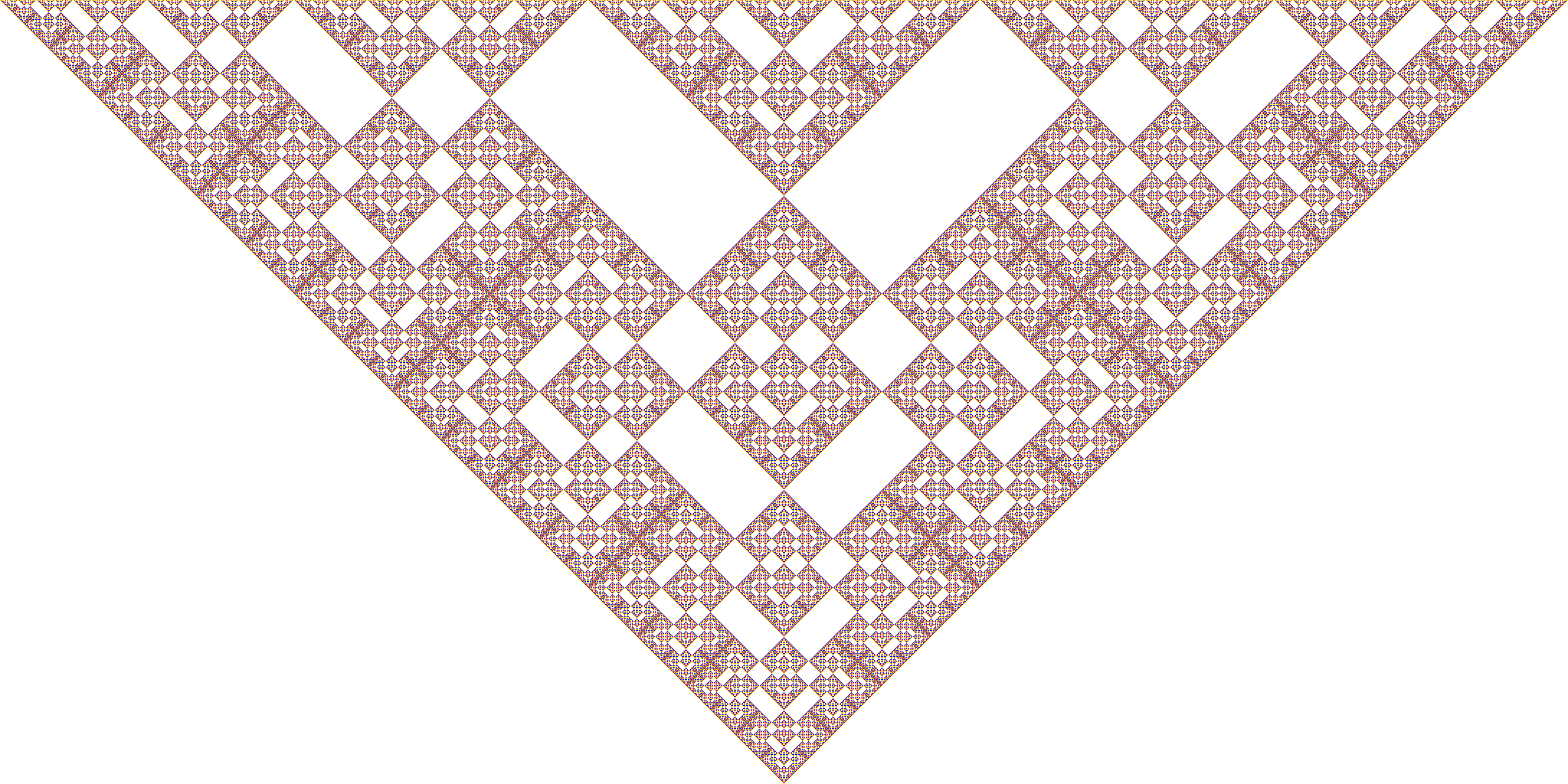}
\caption{Spacetime diagram of $\Theta$ up to a large power of $2$. Also $\overline{X_2(\Theta)}$. Time goes from bottom to top\label{fig:theta}}
\end{figure}

More often than not, one can get a good idea about what $\overline{X_p}$ looks like just by examining the spacetime diagram.  We think in particular of linear CA in the sense of \cite{fractal}.  In this case,  $\Alphh=R^d$, where $R$ is a finite abelian ring, and $d$ some positive integer.  The algebra of CA that are homomorphisms of $\pa{R^d}^{\Z}$ is then isomorphic to $\joliM_d(R)[u,u^{-1}]$: read section~1 of \cite{fractal} for details.  

If $\Ac$ is such a linear CA and if $0$ denotes the neutral element of $R^d$, the sets $X_p$ can be derived from the functions $\Ac_{x,\overline{0}}^y$ where $\overline{0}$ denotes the uniform configuration everywhere equal to $0$. Indeed, for any configuration $c$, we have:
\[\Ac_{x,c}^y\text{ bijective (resp. constant)}\iff \Ac_{x,\overline{0}}^y\text{ bijective (resp. constant)}\]

In the sequel we denote $\Ac_{x,\overline{0}}^y$ by $\Ac_x^y$.  The remainder of this section focuses on reversible cellular automata.

\subsection{$\Theta$: a reversible CA which cannot simulate the identity}

Let us look at a more interesting example. The alphabet is now $\pa{\Z_2}^2$, and the transition is given by 
\[\Theta=\pa{\begin{array}{cc}0  & 1 \\ 1 & u^{-1}+1+u\end{array}}.\]  Since it already serves as a red thread through \cite{fractal}, we will pass very quickly on it.  Let us notice here that, since its determinant is $1$, it is reversible, and that its inverse is $\Theta^{-1}=\pa{\begin{array}{cc}u^{-1}+1+u & 1 \\ 1 & 0 \end{array}}$.  Obviously, $\Theta$ simulates its own inverse: in fact $\Theta^{-1}\sac_{\varphi}\Theta$ with $\varphi=\pa{\begin{array}{cc}0&1\\1&0\end{array}}$.

Figure~\ref{fig:theta} represents the spacetime diagram of $\Theta$ up to a large power of $2$, for an initial configuration consisting of one single nonzero cell.  $\Theta$ is ``well-behaved'' in the sense that these spacetime diagrams, for increasingly large powers of $2$, converge to $\overline{X_2(\Theta)}$.  It thus gives in a sense a purely visual proof of the fact that $\Theta$ does not simulate the identity.  Of course, this requires actually some background knowledge, in order for the proof to be correct.  One must know that $\Theta$ is a linear CA, and that $\overline{X_p}$ actually corresponds to its limit spacetime diagram, or at least is not limited to one line.  While $X_p$ is not defined in \cite{fractal}, the information given there on the way to describe the limit spacetime diagram by means of a substitution system justifies this assertion.  The crucial point is that any block that is not empty contains a reduced copy of the whole pattern, which means that in the neighborhood of any non-white point in the limit spacetime diagram, there is a copy of the whole thing, whose tip is then a point in $X_2(\Theta)$; therefore $X_2(\Theta)$ is dense in this pattern. 
And so, adding that $\Theta$ is its own mirror image, we get:

\begin{prop}
$\Theta$ simulates its mirror, its inverse and its dual, but cannot simulate the identity.
\end{prop}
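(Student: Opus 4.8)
The three positive simulation claims are quick, and all rest on a single symmetry. Since the mirror image of a linear CA with rule $M(u)$ is the CA with rule $M(u^{-1})$, and every entry of $\Theta$ is invariant under $u\mapsto u^{-1}$, we have $\mir\circ\Theta\circ\mir=\Theta$: thus $\Theta$ is its own mirror and simulates it trivially. That $\Theta$ simulates $\Theta^{-1}$ is already recorded above through $\Theta^{-1}\sac_{\varphi}\Theta$ for the swap $\varphi$, since a sub-automaton relation is in particular a simulation. For the dual, self-mirror gives $\tilde\Theta=\mir\circ\Theta^{-1}\circ\mir=(\mir\circ\Theta\circ\mir)^{-1}=\Theta^{-1}$ (using $\mir=\mir^{-1}$); hence the dual coincides with the inverse and is simulated as well.

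The real content is the negative statement. By the first item of the corollary to Theorem~\ref{thm:simu}, if $\Theta$ simulated the identity then $\overline{X_2(\Theta)}$ would have to be contained in a single line through the origin. So it suffices to exhibit two points of $X_2(\Theta)$ that, together with the origin, are not collinear, i.e.\ to show that $\overline{X_2(\Theta)}$ spreads genuinely in two dimensions rather than lying on a line.

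To locate points of $X_2(\Theta)$ I would exploit the linear dictionary recalled in Section~\ref{sec:linear}: $\Theta_x^y$ is bijective (resp.\ constant) exactly when the coefficient of $u^{x}$ in the Laurent matrix $\Theta^{y}$ is invertible in $\joliM_2(\Z_2)$ (resp.\ zero). Cayley--Hamilton gives $\Theta^2=a\,\Theta+I$ with $a=u^{-1}+1+u$, whence, using that squaring is additive in characteristic $2$, $\Theta^{2^n}=a^{2^n-1}\Theta+\beta_n I$ with $\beta_{n}=a^{2^{n}-2}+\beta_{n-1}^2$. This makes the coefficient matrices of $\Theta^{2^n}$, and hence the spots, explicitly computable and self-similar in $n$; reading off two families of spots at different space-time slopes (each flanked by the required constant, i.e.\ zero-coefficient, neighbourhoods of width $2^{n-k}$) would yield two non-collinear points and close the argument through the corollary.

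The main obstacle is precisely this last step: certifying the Spot property at the correct self-similar positions, i.e.\ proving that the invertible/bijective and zero/constant patterns persist for all large $n$ with a fixed defect $k$. This is exactly the combinatorial analysis of the Green function of $\Theta$ carried out in \cite{fractal}: the limit space-time diagram of Figure~\ref{fig:theta} is there described by a substitution system in which every nonempty block contains a reduced copy of the whole pattern, so by Lemma~\ref{lemma:xx'} each such tip lies in $X_2(\Theta)$ and $X_2(\Theta)$ is dense in that carpet. Since the carpet is visibly not a line, this supplies the two non-collinear points and completes the proof. I would therefore lean on \cite{fractal} for the fractal structure rather than re-deriving it, reducing the proposition to the assembly above.
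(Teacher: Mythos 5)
Your proposal is correct and follows essentially the same route as the paper: the positive claims via self-mirrorness of $\Theta$, the explicit swap sub-automaton $\Theta^{-1}\sac_{\varphi}\Theta$, and the identity ${\tilde \Theta}=\mir\circ\Theta^{-1}\circ\mir=\Theta^{-1}$, and the negative claim via the corollary to Theorem~\ref{thm:simu} combined with the density of $X_2(\Theta)$ in the fractal limit diagram, justified by the substitution-system description of \cite{fractal} (every nonempty block contains a reduced copy of the whole pattern, whose tip is a point of $X_2(\Theta)$). Your Cayley--Hamilton recursion $\Theta^{2^n}=a^{2^n-1}\Theta+\beta_n I$ is a nice supplementary handle, but since you ultimately defer the certification of the Spot properties to \cite{fractal}, your completed argument coincides with the paper's.
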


\subsection{$\Gamma$: a life in pictures}

Let us now provide an example of a CA that is both space- and time-asymmetric, in the sense that it cannot simulate any of the CA derived from it by inverting space and/or time.  This will be 
\[\Gamma=\pa{\begin{array}{ccc} 0&0&1\\ 0&1&u\\ 1&u&0 \\ \end{array}}\in \joliM_3(\Z_2)[u,u^{-1}].\]  Its inverse is given by $\Gamma^{-1}=\pa{\begin{array}{ccc} u^2&u&1\\ u&1&0\\ 1&0&0 \\ \end{array}}$.  We are going to give only the proof that it does not simulate its inverse: the proof of the two other results would add only length to this article, and can surely be left as an exercise to the reader.

\begin{figure}[htbp]
\begin{center}
    \subfigure[$X_2(\Gamma)$]{\includegraphics[width=0.3\textwidth]{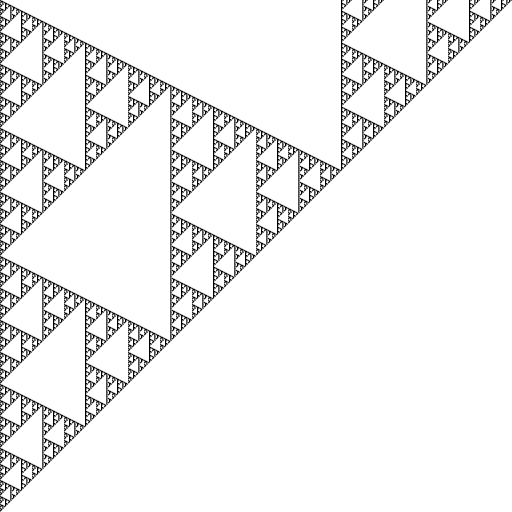}\label{fig:gamma1}}
   \hfil
    \subfigure[$X_2(\Gamma^{-1})$]{\includegraphics[width=0.6\textwidth]{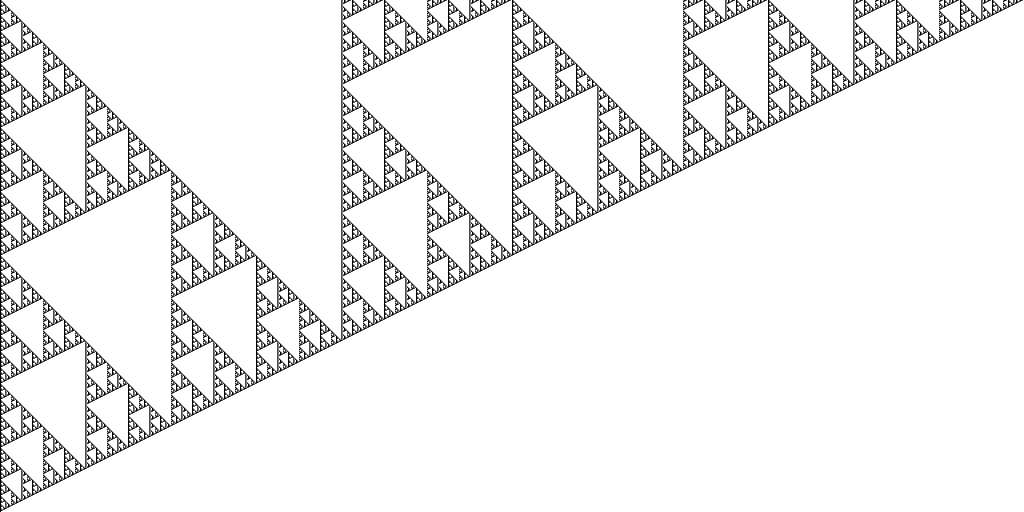}\label{fig:gamma2}}
    \caption{$X_2$ with the second coordinate restricted to $[0,1]$ (time goes from bottom to top)\label{fig:gamma}}
\end{center}
\end{figure}

Let us imagine for one blissful moment that we know $\overline{X_2}$ to be accurately represented by Figures \ref{fig:gamma1} and \ref{fig:gamma2} (actually these figures are mirror images of spacetime diagrams up to a large power of $2$).  How do we conclude then?  

Supposing that $\Gamma$ simulates $\Gamma^{-1}$, we know from theorem~\ref{thm:simu} that for some $\alpha,\beta,\gamma$, $\pi_{\alpha,\beta,\gamma}(\overline{X_2(\Gamma)})$ should be included in $\overline{X_2(\Gamma^{-1})}$.  Since there are only two lines passing through the origin in $\overline{X_2(\Gamma^{-1})}$, $\pi_{\alpha,\beta,\gamma}$ must send respectively $\R (0,1)$ and $\R (-1,1)$ on $\R (0,1)$ and $\R (-2,1)$, which implies $\beta=0$.  Now if we consider the lines joining these two axes, they have slope $\frac{1}{2}$ for $\Gamma$, $1$ in $\Gamma^{-1}$, which means $\alpha=2\gamma$.  So, if $\Gamma$ simulates its inverse, $\overline{X_2(\Gamma)}$ should be, modulo a change of scale, included into $\overline{X_2(\Gamma^{-1})}$, which is clearly not the case.

To make this proof rigorous, we need a tool to prove properties of
$X_2$ for $\Gamma$ and $\Gamma^{-1}$. We are going to follow section~3 of \cite{fractal}, which gives a procedure to derive, from the transition matrix of the CA, a substitution system generating the Green functions (see Proposition~4 of \cite{fractal}). More precisely, we will associate to each CA $F$ a $2\times 2$ substitution system, that is a finite set $E$ and a function ${e:\Z\times\N\rightarrow E}$ such that:
\begin{itemize}
\item $F_x^y$ is a function of $e(x,y)$;
\item for $i,j\in\{0,1\}$, $e(2x+i,2y+j)$ is a function of $e(x,y)$ and $i$ and $j$.
\end{itemize}

The next two subsections give the substitution systems for $\Gamma$ and $\Gamma^{-1}$, and subsection~\ref{sub:final} uses them to formally prove negative result concerning simulation.

\subsubsection{A substitution system for $\Gamma$}
\label{sec:gamma}

The minimal polynomial of $\Gamma$ is $X^3+X^2+(1+u^2)X+1$, so we have the following recurrence relation. 

\begin{equation}
\forall x\in\Z\forall n,y\in\N\; \qquad y<3 \cdot 2^{n}\Longrightarrow
\label{eq:gamma}
\Gamma^{3\cdot 2^n+y}_x=\Gamma^{2^{n+1}+y}_x+\Gamma^{2^n+y}_x+\Gamma^{2^n+y}_{x-2^{n+1}}+\Gamma^y_x
\end{equation}

Now we define $\alpha_{j}(x,y)$ in the following way: these are the coefficients in $\Z/2\Z$ such that for every function $(x,y)\mapsto\Xi_x^y$ fulfilling equation (\ref{eq:gamma}) in lieu of $\Gamma$,

\begin{equation}
\Xi^y_x = \sum\limits_{j=0}^{2}\sum\limits_{i\in\Z} \alpha_{j}(x-i,y)\Xi_{i}^{j}.
\end{equation}

For every $x\in\Z$, $y\in \N$ and $s,t\in\acco{0,1}$, we have 

\begin{equation}
\Xi_{2x+s}^{2y+t}=\sum\limits_{i\in\Z} \alpha_{0}(x-i,y)\Xi_{2i+s}^t + \alpha_{1}(x-i,y)\Xi_{2i+s}^{2+t}+\alpha_{2}(x-i,y)\Xi_{2i+s}^{4+t}.
\end{equation}

In the case $s=t=0$, we have the following derivation:

\begin{equation}
\begin{array}{rcl}
\Xi_{2x}^{2y} &= &\sum\limits_i \alpha_{0}(x-i,y)\Xi_{2i}^0 + \alpha_{1}(x-i,y)\Xi_{2i}^{2}+\alpha_{2}(x-i,y) \Xi_{2i}^4 \\
& = &\sum\limits_i \alpha_{0}(x-i,y)\Xi_{2i}^0 + \alpha_{1}(x-i,y)\Xi_{2i}^{2}+\alpha_{2}(x-i,y) \pa{\Xi_{2i-2}^2 + \Xi_{2i-2}^1 + \Xi_{2i}^0} \\
&=& \sum\limits_i \pa{\alpha_{0}(x-i,y)+\alpha_2 (x-i,y)}\Xi_{2i}^0 +\alpha_{2}(x-1-i,y)\Xi_{2i}^{1}\\&&\hfill + \pa{\alpha_{1}(x-i,y)+\alpha_2 (x-1-i,y)}\Xi_{2i}^{2} \\
\end{array}
\end{equation}

which is to be compared with the definition of $\alpha_{j}$:

\begin{equation}
\Xi_{2x}^{2y}= \sum\limits_i\sum_{j=0}^2 \alpha_{j}(2x-i,2y)\Xi_{i}^j.
\end{equation}

The comparison shows that $\alpha_{j}(2x,2y)$  is a function of $\alpha_{j}(x-i,y)$ for some values of $i$.  $\Gamma$ is peculiar in that $\alpha_{j}(2x+1,2y)=0$, which simplifies our work.  The same operation now has to be performed for $\alpha_{j}(2x,2y+1)$.

\begin{equation}
\begin{array}{rcl}
\Xi_{2x}^{2y+1} &= &\sum\limits_i \alpha_{0}(x-i,y)\Xi_{2i}^1 + \alpha_{1}\Xi_{2i}^3 +\alpha_{2}(x-i,y)\Xi_{2i}^5\\
 &= &\sum\limits_i \alpha_{0}(x-i,y)\Xi_{2i}^1 + \alpha_{1}(x-i,y)\pa{\Xi_{2i}^2 + \Xi_{2i}^1 + \Xi_{2i-2}^1 + \Xi_{2i}^0} \\&&\hfill +\alpha_{2}(x-i,y) \pa{\Xi_{2i}^1 + \Xi_{2i-2}^1 + \Xi_{2i-4}^1 + \Xi_{2i-2}^0}\\
&=&  \sum\limits_i \pa{ \alpha_{1}(x-i,y)+\alpha_{2}(x-1-i,y)}\Xi_{2i}^0 \\
& &  + (\alpha_{0}(x-i,y)+\alpha_{1}(x-1-i,y)+\alpha_{1}(x-i,y)+\alpha_{2}(x-2-i,y)\\&&\hfill +\alpha_{2}(x-1-i,y)+\alpha_{2}(x-i,y))\Xi_{2i}^1 + \alpha_{1}(x-i,y)\Xi_{2i}^2 \\
\end{array}
\end{equation}

Using the representation $\begin{array}{|c|}\hline\alpha_{2}\\ \alpha_{1}\\ \alpha_{0}\\ \hline \end{array}$, we get the following substitution.

\begin{displaymath}
\begin{array}{c}
\begin{array}{|c|}
  \hline
  \alpha_{\cdot}(x,y)\\
  \hline
\end{array}\\
\downarrow\\
\begin{array}{|c|c|}
  \hline
  \alpha_{\cdot}(2x,2y+1) & \alpha_{\cdot}(2x+1,2y+1)\\
  \hline
  \alpha_{\cdot}(2x,2y) & \alpha_{\cdot}(2x+1,2y)\\
  \hline
\end{array}\\
\rotatebox[origin=c]{90}{=}\\
\scalebox{0.85}
{$\begin{array}{|c|c|}
  \hline
  \begin{array}{c} \alpha_{1}(x,y) \\ \alpha_{0}\pa{x,y}+\alpha_{1}\pa{x-1,y}+\alpha_{1}\pa{x,y}+\alpha_{2}\pa{x-2,y}+ \alpha_{2}\pa{x-1,y}+\alpha_{2}\pa{x,y}\\ \alpha_{1}(x,y)+  \alpha_{2}(x-1,y)\end{array} & \begin{array}{c} 0\\ 0 \\ 0 \\ \end{array}\\
  \hline
  \begin{array}{c}\alpha_{1}(x,y)+ \alpha_2(x-1,y) \\ \alpha_2(x-1,y)\\ \alpha_{0}\pa{x,y}+\alpha_{2}\pa{x,y}\\ \end{array} &  \begin{array}{c}0\\ 0\\0 \\ \end{array}\\
  \hline
\end{array}$}
\end{array}
\end{displaymath}

This needs some grouping; for instance, in the present situation, the substitution scheme uses $\alpha_1(x-1,y)$, which is not an information contained in the initial cell.  For instance, if we want to determine $\alpha_0(2x,2y+t)$ for $t\in\acco{0,1}$, we need to know $\alpha_0(x,y)$, $\alpha_1(x,y)$, $\alpha_2(x,y)$ and $\alpha_2(x-1,y)$.  The smallest grouping that will allow us to carry all that information is 

$$\begin{array}{|cccc|}
  \hline
  \alpha_{2}(x-3,y)&\alpha_{2}(x-2,y)&\alpha_{2}(x-1,y)&\alpha_{2}(x,y) \\
&\alpha_{1}(x-2,y)&\alpha_{1}(x-1,y)&\alpha_{1}(x,y) \\
& & \alpha_{0}(x-1,y)& \alpha_{0}(x,y) \\
  \hline
\end{array}.$$  

This gives us an alphabet of size $2^9=512$, and the substitution scheme is

\begin{center}
\scalebox{1}
{$\begin{array}{c}
 \begin{array}{|c|}
    \hline
    \begin{array}{cccc}
      a & b & c & d\\
      & e & f & g \\
      && h & i \\
    \end{array}\\
    \hline
  \end{array}
\\
  \downarrow
\\
  \begin{array}{|cccc|cccc|}
    \hline
    
      0 & f & 0 & g & f & 0 & g & 0  \\
      & a+b+c+e+f+h &  0  & b+c+d+f+g+i && 0  & b+c+d+f+g+i & 0  \\
      & &  0 & c+g&& &  c+g & 0  \\ 

    \hline

      0 & b+f & 0 & c+g & b+f & 0 & c+g & 0 \\
      & b & 0 & c & &  0 & c &0  \\
      && 0 & d+i &  & & d+i &0  \\
    \hline
  \end{array}
\end{array}
$}
\end{center}

The initial state for this substitution system is $\begin{array}{c|c|c|c|c}\hline \cdots&0&D&0&\cdots\\ \hline\end{array}$, and to a cell $ \begin{array}{|c|}
    \hline
    \begin{array}{cccc}
      a & b & c & d\\
      & e & f & g \\
      && h & i \\
    \end{array}\\
    \hline
  \end{array}$ in position $(x,y)$ corresponds the Green function $\Gamma_x^y=\pa{\begin{array}{ccc}d+i & c &g \\c &b+d+i+g & c+f\\g &c+f &b+d+i \\ \end{array}}$.

For a letter $x$ in $\acco{a,b,\ldots,i}$, let $\begin{array}{|c|}\hline X\\ \hline \end{array}$ denote the cell where $x$ has the value 1 whereas all other letters are set to $0$.  We can notice that $A$, $E$ and $H$ are completely equivalent: they all substitute to $\begin{array}{|c|c|}\hline E & 0 \\ \hline 0 & 0 \\ \hline \end{array}$, and project onto $0$ in the computation of $\Gamma_x^y$.  We can therefore simplify this system a bit by putting $A=E=H=0$:

\newcommand\metsmoicaici[3]{\draw (#1,#2) node[draw,minimum size=.4cm] () {#3};}

\begin{figure}[htbp]
\begin{center}
  \fontsize{4}{4}\selectfont
  \begin{tikzpicture}[xscale=.4,yscale=.4]
    \metsmoicaici{31}{31}{F}\metsmoicaici{30}{31}{BG}\metsmoicaici{29}{31}{C}\metsmoicaici{27}{31}{F}\metsmoicaici{26}{31}{DG}\metsmoicaici{23}{31}{F}\metsmoicaici{22}{31}{BG}\metsmoicaici{21}{31}{C}\metsmoicaici{2}{31}{BD}\metsmoicaici{1}{31}{CF}\metsmoicaici{0}{31}{G}
\metsmoicaici{30}{30}{B}\metsmoicaici{29}{30}{C}\metsmoicaici{28}{30}{BD}\metsmoicaici{27}{30}{CF}\metsmoicaici{26}{30}{DG}\metsmoicaici{22}{30}{B}\metsmoicaici{21}{30}{C}\metsmoicaici{4}{30}{BD}\metsmoicaici{3}{30}{CF}\metsmoicaici{2}{30}{BDG}\metsmoicaici{1}{30}{C}\metsmoicaici{0}{30}{D}
\metsmoicaici{29}{29}{F}\metsmoicaici{28}{29}{G}\metsmoicaici{27}{29}{F}\metsmoicaici{26}{29}{DG}\metsmoicaici{22}{29}{BD}\metsmoicaici{21}{29}{C}\metsmoicaici{6}{29}{BD}\metsmoicaici{5}{29}{CF}\metsmoicaici{4}{29}{G}\metsmoicaici{3}{29}{F}\metsmoicaici{2}{29}{DG}\metsmoicaici{1}{29}{F}\metsmoicaici{0}{29}{G}
\metsmoicaici{28}{28}{B}\metsmoicaici{27}{28}{CF}\metsmoicaici{26}{28}{DG}\metsmoicaici{24}{28}{BD}\metsmoicaici{23}{28}{CF}\metsmoicaici{22}{28}{BDG}\metsmoicaici{21}{28}{C}\metsmoicaici{8}{28}{BD}\metsmoicaici{7}{28}{CF}\metsmoicaici{6}{28}{BDG}\metsmoicaici{5}{28}{C}\metsmoicaici{4}{28}{B}\metsmoicaici{3}{28}{CF}\metsmoicaici{2}{28}{DG}\metsmoicaici{0}{28}{D}
\metsmoicaici{27}{27}{F}\metsmoicaici{26}{27}{BG}\metsmoicaici{25}{27}{CF}\metsmoicaici{24}{27}{G}\metsmoicaici{23}{27}{F}\metsmoicaici{22}{27}{BG}\metsmoicaici{21}{27}{C}\metsmoicaici{10}{27}{BD}\metsmoicaici{9}{27}{CF}\metsmoicaici{8}{27}{G}\metsmoicaici{7}{27}{F}\metsmoicaici{6}{27}{BG}\metsmoicaici{5}{27}{C}\metsmoicaici{3}{27}{F}\metsmoicaici{2}{27}{BG}\metsmoicaici{1}{27}{CF}\metsmoicaici{0}{27}{G}
\metsmoicaici{26}{26}{B}\metsmoicaici{25}{26}{C}\metsmoicaici{24}{26}{D}\metsmoicaici{22}{26}{B}\metsmoicaici{21}{26}{C}\metsmoicaici{12}{26}{BD}\metsmoicaici{11}{26}{CF}\metsmoicaici{10}{26}{BDG}\metsmoicaici{9}{26}{C}\metsmoicaici{8}{26}{D}\metsmoicaici{6}{26}{B}\metsmoicaici{5}{26}{C}\metsmoicaici{2}{26}{B}\metsmoicaici{1}{26}{C}\metsmoicaici{0}{26}{D}
\metsmoicaici{25}{25}{F}\metsmoicaici{24}{25}{G}\metsmoicaici{22}{25}{BD}\metsmoicaici{21}{25}{C}\metsmoicaici{14}{25}{BD}\metsmoicaici{13}{25}{CF}\metsmoicaici{12}{25}{G}\metsmoicaici{11}{25}{F}\metsmoicaici{10}{25}{DG}\metsmoicaici{9}{25}{F}\metsmoicaici{8}{25}{G}\metsmoicaici{6}{25}{BD}\metsmoicaici{5}{25}{C}\metsmoicaici{1}{25}{F}\metsmoicaici{0}{25}{G}
\metsmoicaici{24}{24}{B}\metsmoicaici{23}{24}{CF}\metsmoicaici{22}{24}{BDG}\metsmoicaici{21}{24}{C}\metsmoicaici{16}{24}{BD}\metsmoicaici{15}{24}{CF}\metsmoicaici{14}{24}{BDG}\metsmoicaici{13}{24}{C}\metsmoicaici{12}{24}{B}\metsmoicaici{11}{24}{CF}\metsmoicaici{10}{24}{DG}\metsmoicaici{8}{24}{B}\metsmoicaici{7}{24}{CF}\metsmoicaici{6}{24}{BDG}\metsmoicaici{5}{24}{C}\metsmoicaici{0}{24}{D}
\metsmoicaici{23}{23}{F}\metsmoicaici{22}{23}{BG}\metsmoicaici{21}{23}{C}\metsmoicaici{18}{23}{BD}\metsmoicaici{17}{23}{CF}\metsmoicaici{16}{23}{G}\metsmoicaici{15}{23}{F}\metsmoicaici{14}{23}{BG}\metsmoicaici{13}{23}{C}\metsmoicaici{11}{23}{F}\metsmoicaici{10}{23}{DG}\metsmoicaici{7}{23}{F}\metsmoicaici{6}{23}{BG}\metsmoicaici{5}{23}{C}\metsmoicaici{2}{23}{BD}\metsmoicaici{1}{23}{CF}\metsmoicaici{0}{23}{G}
\metsmoicaici{22}{22}{B}\metsmoicaici{21}{22}{C}\metsmoicaici{20}{22}{BD}\metsmoicaici{19}{22}{CF}\metsmoicaici{18}{22}{BDG}\metsmoicaici{17}{22}{C}\metsmoicaici{16}{22}{D}\metsmoicaici{14}{22}{B}\metsmoicaici{13}{22}{C}\metsmoicaici{12}{22}{BD}\metsmoicaici{11}{22}{CF}\metsmoicaici{10}{22}{DG}\metsmoicaici{6}{22}{B}\metsmoicaici{5}{22}{C}\metsmoicaici{4}{22}{BD}\metsmoicaici{3}{22}{CF}\metsmoicaici{2}{22}{BDG}\metsmoicaici{1}{22}{C}\metsmoicaici{0}{22}{D}
\metsmoicaici{21}{21}{F}\metsmoicaici{20}{21}{G}\metsmoicaici{19}{21}{F}\metsmoicaici{18}{21}{DG}\metsmoicaici{17}{21}{F}\metsmoicaici{16}{21}{G}\metsmoicaici{13}{21}{F}\metsmoicaici{12}{21}{G}\metsmoicaici{11}{21}{F}\metsmoicaici{10}{21}{DG}\metsmoicaici{5}{21}{F}\metsmoicaici{4}{21}{G}\metsmoicaici{3}{21}{F}\metsmoicaici{2}{21}{DG}\metsmoicaici{1}{21}{F}\metsmoicaici{0}{21}{G}
\metsmoicaici{20}{20}{B}\metsmoicaici{19}{20}{CF}\metsmoicaici{18}{20}{DG}\metsmoicaici{16}{20}{D}\metsmoicaici{12}{20}{B}\metsmoicaici{11}{20}{CF}\metsmoicaici{10}{20}{DG}\metsmoicaici{4}{20}{B}\metsmoicaici{3}{20}{CF}\metsmoicaici{2}{20}{DG}\metsmoicaici{0}{20}{D}
\metsmoicaici{19}{19}{F}\metsmoicaici{18}{19}{BG}\metsmoicaici{17}{19}{CF}\metsmoicaici{16}{19}{G}\metsmoicaici{11}{19}{F}\metsmoicaici{10}{19}{DG}\metsmoicaici{3}{19}{F}\metsmoicaici{2}{19}{BG}\metsmoicaici{1}{19}{CF}\metsmoicaici{0}{19}{G}
\metsmoicaici{18}{18}{B}\metsmoicaici{17}{18}{C}\metsmoicaici{16}{18}{D}\metsmoicaici{12}{18}{BD}\metsmoicaici{11}{18}{CF}\metsmoicaici{10}{18}{DG}\metsmoicaici{2}{18}{B}\metsmoicaici{1}{18}{C}\metsmoicaici{0}{18}{D}
\metsmoicaici{17}{17}{F}\metsmoicaici{16}{17}{G}\metsmoicaici{14}{17}{BD}\metsmoicaici{13}{17}{CF}\metsmoicaici{12}{17}{G}\metsmoicaici{11}{17}{F}\metsmoicaici{10}{17}{DG}\metsmoicaici{1}{17}{F}\metsmoicaici{0}{17}{G}
\metsmoicaici{16}{16}{B}\metsmoicaici{15}{16}{CF}\metsmoicaici{14}{16}{BDG}\metsmoicaici{13}{16}{C}\metsmoicaici{12}{16}{B}\metsmoicaici{11}{16}{CF}\metsmoicaici{10}{16}{DG}\metsmoicaici{0}{16}{D}
\metsmoicaici{15}{15}{F}\metsmoicaici{14}{15}{BG}\metsmoicaici{13}{15}{C}\metsmoicaici{11}{15}{F}\metsmoicaici{10}{15}{DG}\metsmoicaici{2}{15}{BD}\metsmoicaici{1}{15}{CF}\metsmoicaici{0}{15}{G}
\metsmoicaici{14}{14}{B}\metsmoicaici{13}{14}{C}\metsmoicaici{12}{14}{BD}\metsmoicaici{11}{14}{CF}\metsmoicaici{10}{14}{DG}\metsmoicaici{4}{14}{BD}\metsmoicaici{3}{14}{CF}\metsmoicaici{2}{14}{BDG}\metsmoicaici{1}{14}{C}\metsmoicaici{0}{14}{D}
\metsmoicaici{13}{13}{F}\metsmoicaici{12}{13}{G}\metsmoicaici{11}{13}{F}\metsmoicaici{10}{13}{DG}\metsmoicaici{6}{13}{BD}\metsmoicaici{5}{13}{CF}\metsmoicaici{4}{13}{G}\metsmoicaici{3}{13}{F}\metsmoicaici{2}{13}{DG}\metsmoicaici{1}{13}{F}\metsmoicaici{0}{13}{G}
\metsmoicaici{12}{12}{B}\metsmoicaici{11}{12}{CF}\metsmoicaici{10}{12}{DG}\metsmoicaici{8}{12}{BD}\metsmoicaici{7}{12}{CF}\metsmoicaici{6}{12}{BDG}\metsmoicaici{5}{12}{C}\metsmoicaici{4}{12}{B}\metsmoicaici{3}{12}{CF}\metsmoicaici{2}{12}{DG}\metsmoicaici{0}{12}{D}
\metsmoicaici{11}{11}{F}\metsmoicaici{10}{11}{BG}\metsmoicaici{9}{11}{CF}\metsmoicaici{8}{11}{G}\metsmoicaici{7}{11}{F}\metsmoicaici{6}{11}{BG}\metsmoicaici{5}{11}{C}\metsmoicaici{3}{11}{F}\metsmoicaici{2}{11}{BG}\metsmoicaici{1}{11}{CF}\metsmoicaici{0}{11}{G}
\metsmoicaici{10}{10}{B}\metsmoicaici{9}{10}{C}\metsmoicaici{8}{10}{D}\metsmoicaici{6}{10}{B}\metsmoicaici{5}{10}{C}\metsmoicaici{2}{10}{B}\metsmoicaici{1}{10}{C}\metsmoicaici{0}{10}{D}
\metsmoicaici{9}{9}{F}\metsmoicaici{8}{9}{G}\metsmoicaici{6}{9}{BD}\metsmoicaici{5}{9}{C}\metsmoicaici{1}{9}{F}\metsmoicaici{0}{9}{G}
\metsmoicaici{8}{8}{B}\metsmoicaici{7}{8}{CF}\metsmoicaici{6}{8}{BDG}\metsmoicaici{5}{8}{C}\metsmoicaici{0}{8}{D}
\metsmoicaici{7}{7}{F}\metsmoicaici{6}{7}{BG}\metsmoicaici{5}{7}{C}\metsmoicaici{2}{7}{BD}\metsmoicaici{1}{7}{CF}\metsmoicaici{0}{7}{G}
\metsmoicaici{6}{6}{B}\metsmoicaici{5}{6}{C}\metsmoicaici{4}{6}{BD}\metsmoicaici{3}{6}{CF}\metsmoicaici{2}{6}{BDG}\metsmoicaici{1}{6}{C}\metsmoicaici{0}{6}{D}
\metsmoicaici{5}{5}{F}\metsmoicaici{4}{5}{G}\metsmoicaici{3}{5}{F}\metsmoicaici{2}{5}{DG}\metsmoicaici{1}{5}{F}\metsmoicaici{0}{5}{G}
\metsmoicaici{4}{4}{B}\metsmoicaici{3}{4}{CF}\metsmoicaici{2}{4}{DG}\metsmoicaici{0}{4}{D}
\metsmoicaici{3}{3}{F}\metsmoicaici{2}{3}{BG}\metsmoicaici{1}{3}{CF}\metsmoicaici{0}{3}{G}
\metsmoicaici{2}{2}{B}\metsmoicaici{1}{2}{C}\metsmoicaici{0}{2}{D}
\metsmoicaici{1}{1}{F}\metsmoicaici{0}{1}{G}
\metsmoicaici{0}{0}{D}

  \end{tikzpicture}
\end{center}
\caption{Fifth step of $\Gamma$'s substitution system (time goes from bottom to top).\label{fig:gamma_steps}}
\end{figure}
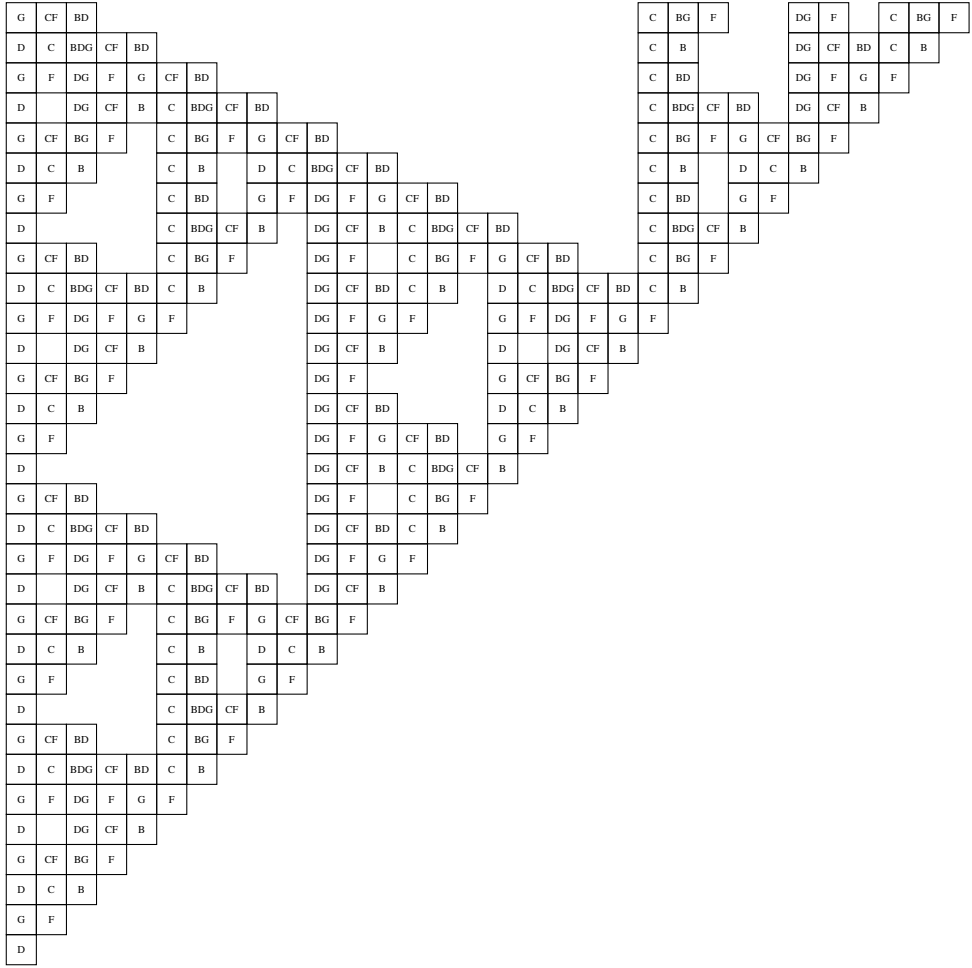

Whereas we have a theoretical number of $2^5=32$ different states in the substitution scheme, only 11 of them are accessible from the initial state, namely 0 plus the ones represented in Figure~\ref{fig:graphe}.  This graph has two strongly connected components, one composed of $BD$ alone, the other of the remaining vertices.  In particular, from any state of the substitution system that has been accessed from the initial state and that is neither $0$ nor $BD$, there is a path to $D$; therefore there must be a point of $X_2$ in the corresponding square.

\begin{figure}[htbp]
  \begin{center}
    \begin{tikzpicture}[node distance=2cm,>=stealth',bend angle=30,auto, circle,draw]
      \tikzstyle{nice}=[circle,thick,draw,fill=gray!10,minimum size=1cm]%
      \tikzstyle{mydblarrow}=[<->,very thick]
      \tikzstyle{myarrow}=[->,very thick]
      \node [nice] (D) {D};%
      \node [nice] (G) [right of=D] {G};%
      \node [nice] (CF) [right of=G] {CF};%
      \node [nice] (BDG) [right of=CF] {BDG};%
      \node [nice] (F) [below of=D] {F};%
      \node [nice] (C) [below right of=G] {C};%
      \node [nice] (B) [below of=C] {B};%
      \node [nice] (DG) [right of=B] {DG};%
      \node [nice] (BG) [left of=B] {BG};%
      \node [nice,fill=red!15] (BD) [below of=B] {BD};%
  
      \draw[mydblarrow] (D) edge (G);%
      \draw[mydblarrow] (C) edge (DG);%
      \draw[mydblarrow] (CF) edge (BDG);%
      \draw[myarrow] (G) edge (CF);%
      \draw[myarrow] (C) edge (F);%
      \draw[myarrow] (D) edge (F);%
      \draw[myarrow] (F) edge (BG);%
      \draw[myarrow] (G) edge (C);%
      \draw[myarrow] (B) edge [bend left] (G);%
      \draw[myarrow] (BDG) edge (C);%
      \draw[myarrow] (BDG) edge (B);%
      \draw[myarrow] (B) edge (F);%
      \draw[myarrow] (BG) edge (C);%
      \draw[myarrow] (BG) edge (BD);%
      \draw[myarrow] (C) edge (CF);%
      \draw[myarrow] (CF) edge [bend left] (BD);%
      \draw[myarrow] (D) edge [loop above] (D);%
      \draw[myarrow] (G) edge [loop above] (G);%
      \draw[myarrow] (CF) edge [loop above] (CF);%
      \draw[myarrow] (F) edge [loop left] (F);%
      \draw[myarrow] (BD) edge [loop left] (BD);%
    \end{tikzpicture}
  \end{center}
\caption{Transition graph of the substitution system: an arrow from state $s_1$ to state $s_2$ means that $s_2$ can be obtained after a finite number of iterations starting from $s_1$.\label{fig:graphe}}
\end{figure}

\subsubsection{A substitution system for $\Gamma^{-1}$}

We now have to perform the equivalent analysis for $\Gamma^{-1}$, which we will name $\Omega$, in order to avoid possible confusions with negative exponents.  The minimal polynomial of $\Omega$ is $X^3+(1+u^2)X^2+X+1$, so now the recurrence relation is

\begin{equation}
\forall x\in\Z\forall n,y\in\N\; \qquad y<3 \cdot 2^{n}\Longrightarrow
\label{eq:omega}
\Omega^{3\cdot 2^n+y}_x=\Omega^{2^{n+1}+y}_x+\Omega^{2^{n+1}+y}_{x-2^{n+1}}+\Omega^{2^n+y}_x+\Omega^y_x.
\end{equation}

We introduce $\beta$, which is to $\Omega$ what $\alpha$ was to $\Gamma$ in Section~\ref{sec:gamma}.

\begin{equation}
\Xi_{2x+s}^{2y+t}=\sum\limits_i \beta_{0}(x-i,y)\Xi_{2i+s}^t + \beta_{1}(x-i,y)\Xi_{2i+s}^{2+t}+\beta_{2}(x-i,y)\Xi_{2i+s}^{4+t}
\end{equation}

We then get the following decompositions.

\begin{equation}
\begin{array}{rcl}
\Xi_{2x}^{2y} &= &\sum\limits_i \beta_{0}(x-i,y)\Xi_{2i}^0 + \beta_{1}(x-i,y)\Xi_{2i}^{2}+\beta_{2}(x-i,y) \Xi_{2i}^4 \\
& = &\sum\limits_i \beta_{0}(x-i,y)\Xi_{2i}^0 + \beta_{1}(x-i,y)\Xi_{2i}^{2}+\beta_{2}(x-i,y) \pa{\Xi_{2i-4}^2 + \Xi_{2i-2}^1 + \Xi_{2i-2}^0 + \Xi_{2i}^0} \\
&=& \sum\limits_i \pa{\beta_{0}(x-i,y)+\beta_2 (x-1-i,y)+\beta_2 (x-i,y)}\Xi_{2i}^0 +\beta_{2}(x-1-i,y)\Xi_{2i}^{1}\\
&& + \pa{\beta_{1}(x-i,y)+\beta_2 (x-2-i,y)}\Xi_{2i}^{2} \\
\end{array}
\end{equation}

This is to be compared to this definition of $\beta_j$:
\begin{equation}
\Xi_{2x}^{2y}= \sum\limits_i\sum_{j=0}^2 \beta_{j}(2x-i,2y)\Xi_{i}^j
\end{equation}

Likewise, for $y\mapsto 2y+1$, we get

\begin{equation}
\begin{array}{rcl}
\Xi_{2x}^{2y+1} &=
&  \sum\limits_i \pa{ \beta_{1}(x-i,y)+\beta_{2}(x-2-i,y)}\Xi_{2i}^0 \\
& &  + [\beta_{0}(x-i,y)+\beta_{1}(x-i,y)\\
&& \hfill +\beta_{2}(x-2-i,y)+\beta_{2}(x-1-i,y)+\beta_{2}(x-i,y)]\Xi_{2i}^1 \\
&& + [\beta_{1}(x-1-i,y)+\beta_{1}(x-i,y)\\
&& \hfill +\beta_{2}(x-3-i,y)+\beta_{2}(x-2-i,y)+\beta_{2}(x-1-i,y)]\Xi_{2i}^2 \\
\end{array}
\end{equation}

The minimal grouping is now\ldots

$$\begin{array}{|cccccc|}
  \hline
  \beta_{2}(x-5,y)&\beta_{2}(x-4,y)&\beta_{2}(x-3,y)&\beta_{2}(x-2,y)&\beta_{2}(x-1,y)&\beta_{2}(x,y) \\
&&\beta_{1}(x-3,y)&\beta_{1}(x-2,y)&\beta_{1}(x-1,y)&\beta_{1}(x,y) \\
&&& & \beta_{0}(x-1,y)& \beta_{0}(x,y) \\
  \hline
\end{array}$$

\ldots and the corresponding substitution scheme is given by

\begin{center}
\tiny
\scalebox{0.9}
{$\begin{array}{c}
 \begin{array}{|c|}
    \hline
    \begin{array}{cccccc}
      a & b & c & d&e&f \\
      &&g& h & i & j \\
      &&&& k & l \\
    \end{array}\\
    \hline
  \end{array}
\\
  \downarrow
\\
  \begin{array}{|cccccc|cccccc|}
    \hline
    
      0& a+b+c+g+h & 0 & b+c+d+h+i & 0 & c+d+e+i+j &
      		a+b+c+g+h & 0 & b+c+d+h+i & 0 & c+d+e+i+j & 0 \\
       && 0&  c+d+e+i+k  &  0  & d+e+f+j+l &  
		&& c+d+e+i+k & 0 & d+e+f+j+l & 0  \\
      &&&& 0 & d+j &
		&&&&d+j&0  \\ 

    \hline

      0 & b+h & 0 & c+i & 0 & d+j &
		b+h & 0 & c+i & 0 & d+j & 0 \\
      & & 0 &d &0&e &
		&&d&0&e&0  \\
      & & & & 0 & e+f+l &
      		&&&& e+f+l & 0\\
    \hline
  \end{array}
\end{array}
$}
\end{center}

The initial state is $\begin{array}{c|c|c|c|c|c}\hline \cdots&0&L&K&0&\cdots\\ \hline\end{array}$, and 
$\left(\Gamma^{-1}\right)_x^y$ is given by 

$$\pa{\begin{array}{ccc}l+h+f+d+b & i+e+c &j+d \\i+e+c &l+j+f+d & e\\j+d &e &l+f \\ \end{array}}.$$

$A$ being equivalent to $G$, $B$ to $H$ and $F$ to $L$, we get the simpler

\begin{center}
\scalebox{1}
{$\begin{array}{c}
 \begin{array}{|c|}
    \hline
    \begin{array}{cccc}
      c & d&e& \\
      g& h & i & j \\
      && k & l \\
    \end{array}\\
    \hline
  \end{array}
\\
  \downarrow
\\
  \begin{array}{|cccc|cccc|}
    \hline
    
      0 & c+d+h+i & 0 &  &
      	  c+d+h+i & 0 & c+d+e+i+j& \\
        0&  d+e+g+h+i+k  &  0  & d+e+j+l &  
		d+e+g+h+i+k & 0 & d+e+j+l & 0  \\
      && 0 & c+e+i &
		&&d+j&0  \\ 

    \hline

       0 & c+i & 0 & &
	 	 c+i & 0 & d+j & \\
       0 &d+h &0&e &
		d+h&0&e&0  \\
       & & 0 & d+e+j+l &
      		&& e+l & 0\\
    \hline
  \end{array}
\end{array}
$}
\end{center}

This first simplification makes $G$ equivalent to $K$, $C$ to $IK$ and $J$ to $DH$, so we finally get

\begin{center}
\scalebox{1}
{$\begin{array}{c}
 \begin{array}{|c|}
    \hline
    \begin{array}{cc}
        d&e \\
       h & i   \\
       k & l \\
    \end{array}\\
    \hline
  \end{array}
\\
  \downarrow
\\
  \begin{array}{|cc|cc|}
    \hline
    
      e+h+i+l & 0 &
      	  0 & d+e+i \\
         h+i+k+l  &  0 &
		 0 & e+h+l \\
      0 & e+i &
		d+k&0  \\ 

    \hline

       e+i & 0   &
	 	  0 & d  \\
       d+e+h &0&
		 0&e+i  \\
       0 & d+e+l &
      		d+e+h+i+l & 0\\
    \hline
  \end{array}
\end{array}
$}
\end{center}

\ldots which results after five steps in Figure~\ref{fig:omega_steps}.

\renewcommand\metsmoicaici[3]{\draw (#2,-#1) node[draw,minimum size=.35cm] () {#3};}

\begin{figure}[htbp]
\begin{center}
  \fontsize{3}{3}\selectfont
  \begin{tikzpicture}[xscale=.35,yscale=.35]
\metsmoicaici{63}{31}{K}\metsmoicaici{62}{31}{H}\metsmoicaici{61}{31}{I}\metsmoicaici{60}{31}{DH}\metsmoicaici{59}{31}{EI}\metsmoicaici{58}{31}{DHL}\metsmoicaici{57}{31}{EIK}\metsmoicaici{56}{31}{DH}\metsmoicaici{55}{31}{K}\metsmoicaici{54}{31}{H}\metsmoicaici{53}{31}{EK}\metsmoicaici{49}{31}{EIK}\metsmoicaici{48}{31}{DH}\metsmoicaici{47}{31}{K}\metsmoicaici{46}{31}{H}\metsmoicaici{45}{31}{I}\metsmoicaici{44}{31}{DH}\metsmoicaici{43}{31}{EI}\metsmoicaici{42}{31}{DHL}\metsmoicaici{33}{31}{EIK}\metsmoicaici{32}{31}{DH}\metsmoicaici{31}{31}{K}\metsmoicaici{30}{31}{H}\metsmoicaici{29}{31}{I}\metsmoicaici{28}{31}{DH}\metsmoicaici{27}{31}{EI}\metsmoicaici{26}{31}{DHL}\metsmoicaici{25}{31}{EIK}\metsmoicaici{24}{31}{DH}\metsmoicaici{23}{31}{K}\metsmoicaici{22}{31}{H}\metsmoicaici{21}{31}{EK}\metsmoicaici{1}{31}{EIK}\metsmoicaici{0}{31}{DH}
\metsmoicaici{61}{30}{K}\metsmoicaici{60}{30}{H}\metsmoicaici{59}{30}{IK}\metsmoicaici{58}{30}{D}\metsmoicaici{57}{30}{E}\metsmoicaici{56}{30}{L}\metsmoicaici{54}{30}{HL}\metsmoicaici{53}{30}{EK}\metsmoicaici{50}{30}{HL}\metsmoicaici{49}{30}{E}\metsmoicaici{48}{30}{L}\metsmoicaici{45}{30}{K}\metsmoicaici{44}{30}{H}\metsmoicaici{43}{30}{IK}\metsmoicaici{42}{30}{DHL}\metsmoicaici{34}{30}{HL}\metsmoicaici{33}{30}{E}\metsmoicaici{32}{30}{L}\metsmoicaici{29}{30}{K}\metsmoicaici{28}{30}{H}\metsmoicaici{27}{30}{IK}\metsmoicaici{26}{30}{D}\metsmoicaici{25}{30}{E}\metsmoicaici{24}{30}{L}\metsmoicaici{22}{30}{HL}\metsmoicaici{21}{30}{EK}\metsmoicaici{2}{30}{HL}\metsmoicaici{1}{30}{E}\metsmoicaici{0}{30}{L}
\metsmoicaici{59}{29}{K}\metsmoicaici{58}{29}{H}\metsmoicaici{57}{29}{I}\metsmoicaici{56}{29}{DH}\metsmoicaici{55}{29}{EIK}\metsmoicaici{54}{29}{DL}\metsmoicaici{53}{29}{EK}\metsmoicaici{51}{29}{EIK}\metsmoicaici{50}{29}{DL}\metsmoicaici{49}{29}{I}\metsmoicaici{48}{29}{DH}\metsmoicaici{43}{29}{EI}\metsmoicaici{42}{29}{DHL}\metsmoicaici{35}{29}{EIK}\metsmoicaici{34}{29}{DL}\metsmoicaici{33}{29}{I}\metsmoicaici{32}{29}{DH}\metsmoicaici{27}{29}{K}\metsmoicaici{26}{29}{H}\metsmoicaici{25}{29}{I}\metsmoicaici{24}{29}{DH}\metsmoicaici{23}{29}{EIK}\metsmoicaici{22}{29}{DL}\metsmoicaici{21}{29}{EK}\metsmoicaici{3}{29}{EIK}\metsmoicaici{2}{29}{DL}\metsmoicaici{1}{29}{I}\metsmoicaici{0}{29}{DH}
\metsmoicaici{57}{28}{K}\metsmoicaici{56}{28}{H}\metsmoicaici{55}{28}{IK}\metsmoicaici{54}{28}{D}\metsmoicaici{53}{28}{EK}\metsmoicaici{52}{28}{HL}\metsmoicaici{51}{28}{IK}\metsmoicaici{50}{28}{DHL}\metsmoicaici{49}{28}{K}\metsmoicaici{48}{28}{L}\metsmoicaici{44}{28}{HL}\metsmoicaici{43}{28}{IK}\metsmoicaici{42}{28}{DHL}\metsmoicaici{36}{28}{HL}\metsmoicaici{35}{28}{IK}\metsmoicaici{34}{28}{DHL}\metsmoicaici{33}{28}{K}\metsmoicaici{32}{28}{L}\metsmoicaici{25}{28}{K}\metsmoicaici{24}{28}{H}\metsmoicaici{23}{28}{IK}\metsmoicaici{22}{28}{D}\metsmoicaici{21}{28}{EK}\metsmoicaici{4}{28}{HL}\metsmoicaici{3}{28}{IK}\metsmoicaici{2}{28}{DHL}\metsmoicaici{1}{28}{K}\metsmoicaici{0}{28}{L}
\metsmoicaici{55}{27}{K}\metsmoicaici{54}{27}{H}\metsmoicaici{53}{27}{I}\metsmoicaici{52}{27}{DH}\metsmoicaici{51}{27}{EI}\metsmoicaici{50}{27}{DHL}\metsmoicaici{49}{27}{EIK}\metsmoicaici{48}{27}{DH}\metsmoicaici{45}{27}{EIK}\metsmoicaici{44}{27}{DH}\metsmoicaici{43}{27}{EI}\metsmoicaici{42}{27}{DHL}\metsmoicaici{37}{27}{EIK}\metsmoicaici{36}{27}{DH}\metsmoicaici{35}{27}{EI}\metsmoicaici{34}{27}{DHL}\metsmoicaici{33}{27}{EIK}\metsmoicaici{32}{27}{DH}\metsmoicaici{23}{27}{K}\metsmoicaici{22}{27}{H}\metsmoicaici{21}{27}{EK}\metsmoicaici{5}{27}{EIK}\metsmoicaici{4}{27}{DH}\metsmoicaici{3}{27}{EI}\metsmoicaici{2}{27}{DHL}\metsmoicaici{1}{27}{EIK}\metsmoicaici{0}{27}{DH}
\metsmoicaici{53}{26}{K}\metsmoicaici{52}{26}{H}\metsmoicaici{51}{26}{IK}\metsmoicaici{50}{26}{D}\metsmoicaici{49}{26}{E}\metsmoicaici{48}{26}{L}\metsmoicaici{46}{26}{HL}\metsmoicaici{45}{26}{E}\metsmoicaici{44}{26}{H}\metsmoicaici{43}{26}{IK}\metsmoicaici{42}{26}{DHL}\metsmoicaici{38}{26}{HL}\metsmoicaici{37}{26}{E}\metsmoicaici{36}{26}{H}\metsmoicaici{35}{26}{IK}\metsmoicaici{34}{26}{D}\metsmoicaici{33}{26}{E}\metsmoicaici{32}{26}{L}\metsmoicaici{22}{26}{HL}\metsmoicaici{21}{26}{EK}\metsmoicaici{6}{26}{HL}\metsmoicaici{5}{26}{E}\metsmoicaici{4}{26}{H}\metsmoicaici{3}{26}{IK}\metsmoicaici{2}{26}{D}\metsmoicaici{1}{26}{E}\metsmoicaici{0}{26}{L}
\metsmoicaici{51}{25}{K}\metsmoicaici{50}{25}{H}\metsmoicaici{49}{25}{I}\metsmoicaici{48}{25}{DH}\metsmoicaici{47}{25}{EIK}\metsmoicaici{46}{25}{DL}\metsmoicaici{45}{25}{EK}\metsmoicaici{43}{25}{EI}\metsmoicaici{42}{25}{DHL}\metsmoicaici{39}{25}{EIK}\metsmoicaici{38}{25}{DL}\metsmoicaici{37}{25}{EK}\metsmoicaici{35}{25}{K}\metsmoicaici{34}{25}{H}\metsmoicaici{33}{25}{I}\metsmoicaici{32}{25}{DH}\metsmoicaici{23}{25}{EIK}\metsmoicaici{22}{25}{DL}\metsmoicaici{21}{25}{EK}\metsmoicaici{7}{25}{EIK}\metsmoicaici{6}{25}{DL}\metsmoicaici{5}{25}{EK}\metsmoicaici{3}{25}{K}\metsmoicaici{2}{25}{H}\metsmoicaici{1}{25}{I}\metsmoicaici{0}{25}{DH}
\metsmoicaici{49}{24}{K}\metsmoicaici{48}{24}{H}\metsmoicaici{47}{24}{IK}\metsmoicaici{46}{24}{D}\metsmoicaici{45}{24}{EK}\metsmoicaici{44}{24}{HL}\metsmoicaici{43}{24}{IK}\metsmoicaici{42}{24}{DHL}\metsmoicaici{40}{24}{HL}\metsmoicaici{39}{24}{IK}\metsmoicaici{38}{24}{D}\metsmoicaici{37}{24}{EK}\metsmoicaici{33}{24}{K}\metsmoicaici{32}{24}{L}\metsmoicaici{24}{24}{HL}\metsmoicaici{23}{24}{IK}\metsmoicaici{22}{24}{D}\metsmoicaici{21}{24}{EK}\metsmoicaici{8}{24}{HL}\metsmoicaici{7}{24}{IK}\metsmoicaici{6}{24}{D}\metsmoicaici{5}{24}{EK}\metsmoicaici{1}{24}{K}\metsmoicaici{0}{24}{L}
\metsmoicaici{47}{23}{K}\metsmoicaici{46}{23}{H}\metsmoicaici{45}{23}{I}\metsmoicaici{44}{23}{DH}\metsmoicaici{43}{23}{EI}\metsmoicaici{42}{23}{DHL}\metsmoicaici{41}{23}{EIK}\metsmoicaici{40}{23}{DH}\metsmoicaici{39}{23}{K}\metsmoicaici{38}{23}{H}\metsmoicaici{37}{23}{EK}\metsmoicaici{33}{23}{EIK}\metsmoicaici{32}{23}{DH}\metsmoicaici{25}{23}{EIK}\metsmoicaici{24}{23}{DH}\metsmoicaici{23}{23}{K}\metsmoicaici{22}{23}{H}\metsmoicaici{21}{23}{EK}\metsmoicaici{9}{23}{EIK}\metsmoicaici{8}{23}{DH}\metsmoicaici{7}{23}{K}\metsmoicaici{6}{23}{H}\metsmoicaici{5}{23}{EK}\metsmoicaici{1}{23}{EIK}\metsmoicaici{0}{23}{DH}
\metsmoicaici{45}{22}{K}\metsmoicaici{44}{22}{H}\metsmoicaici{43}{22}{IK}\metsmoicaici{42}{22}{D}\metsmoicaici{41}{22}{E}\metsmoicaici{40}{22}{L}\metsmoicaici{38}{22}{HL}\metsmoicaici{37}{22}{EK}\metsmoicaici{34}{22}{HL}\metsmoicaici{33}{22}{E}\metsmoicaici{32}{22}{L}\metsmoicaici{26}{22}{HL}\metsmoicaici{25}{22}{E}\metsmoicaici{24}{22}{L}\metsmoicaici{22}{22}{HL}\metsmoicaici{21}{22}{EK}\metsmoicaici{10}{22}{HL}\metsmoicaici{9}{22}{E}\metsmoicaici{8}{22}{L}\metsmoicaici{6}{22}{HL}\metsmoicaici{5}{22}{EK}\metsmoicaici{2}{22}{HL}\metsmoicaici{1}{22}{E}\metsmoicaici{0}{22}{L}
\metsmoicaici{43}{21}{K}\metsmoicaici{42}{21}{H}\metsmoicaici{41}{21}{I}\metsmoicaici{40}{21}{DH}\metsmoicaici{39}{21}{EIK}\metsmoicaici{38}{21}{DL}\metsmoicaici{37}{21}{EK}\metsmoicaici{35}{21}{EIK}\metsmoicaici{34}{21}{DL}\metsmoicaici{33}{21}{I}\metsmoicaici{32}{21}{DH}\metsmoicaici{27}{21}{EIK}\metsmoicaici{26}{21}{DL}\metsmoicaici{25}{21}{I}\metsmoicaici{24}{21}{DH}\metsmoicaici{23}{21}{EIK}\metsmoicaici{22}{21}{DL}\metsmoicaici{21}{21}{EK}\metsmoicaici{11}{21}{EIK}\metsmoicaici{10}{21}{DL}\metsmoicaici{9}{21}{I}\metsmoicaici{8}{21}{DH}\metsmoicaici{7}{21}{EIK}\metsmoicaici{6}{21}{DL}\metsmoicaici{5}{21}{EK}\metsmoicaici{3}{21}{EIK}\metsmoicaici{2}{21}{DL}\metsmoicaici{1}{21}{I}\metsmoicaici{0}{21}{DH}
\metsmoicaici{41}{20}{K}\metsmoicaici{40}{20}{H}\metsmoicaici{39}{20}{IK}\metsmoicaici{38}{20}{D}\metsmoicaici{37}{20}{EK}\metsmoicaici{36}{20}{HL}\metsmoicaici{35}{20}{IK}\metsmoicaici{34}{20}{DHL}\metsmoicaici{33}{20}{K}\metsmoicaici{32}{20}{L}\metsmoicaici{28}{20}{HL}\metsmoicaici{27}{20}{IK}\metsmoicaici{26}{20}{DHL}\metsmoicaici{25}{20}{K}\metsmoicaici{24}{20}{H}\metsmoicaici{23}{20}{IK}\metsmoicaici{22}{20}{D}\metsmoicaici{21}{20}{EK}\metsmoicaici{12}{20}{HL}\metsmoicaici{11}{20}{IK}\metsmoicaici{10}{20}{DHL}\metsmoicaici{9}{20}{K}\metsmoicaici{8}{20}{H}\metsmoicaici{7}{20}{IK}\metsmoicaici{6}{20}{D}\metsmoicaici{5}{20}{EK}\metsmoicaici{4}{20}{HL}\metsmoicaici{3}{20}{IK}\metsmoicaici{2}{20}{DHL}\metsmoicaici{1}{20}{K}\metsmoicaici{0}{20}{L}
\metsmoicaici{39}{19}{K}\metsmoicaici{38}{19}{H}\metsmoicaici{37}{19}{I}\metsmoicaici{36}{19}{DH}\metsmoicaici{35}{19}{EI}\metsmoicaici{34}{19}{DHL}\metsmoicaici{33}{19}{EIK}\metsmoicaici{32}{19}{DH}\metsmoicaici{29}{19}{EIK}\metsmoicaici{28}{19}{DH}\metsmoicaici{27}{19}{EI}\metsmoicaici{26}{19}{DHL}\metsmoicaici{23}{19}{K}\metsmoicaici{22}{19}{H}\metsmoicaici{21}{19}{EK}\metsmoicaici{13}{19}{EIK}\metsmoicaici{12}{19}{DH}\metsmoicaici{11}{19}{EI}\metsmoicaici{10}{19}{DHL}\metsmoicaici{7}{19}{K}\metsmoicaici{6}{19}{H}\metsmoicaici{5}{19}{I}\metsmoicaici{4}{19}{DH}\metsmoicaici{3}{19}{EI}\metsmoicaici{2}{19}{DHL}\metsmoicaici{1}{19}{EIK}\metsmoicaici{0}{19}{DH}
\metsmoicaici{37}{18}{K}\metsmoicaici{36}{18}{H}\metsmoicaici{35}{18}{IK}\metsmoicaici{34}{18}{D}\metsmoicaici{33}{18}{E}\metsmoicaici{32}{18}{L}\metsmoicaici{30}{18}{HL}\metsmoicaici{29}{18}{E}\metsmoicaici{28}{18}{H}\metsmoicaici{27}{18}{IK}\metsmoicaici{26}{18}{DHL}\metsmoicaici{22}{18}{HL}\metsmoicaici{21}{18}{EK}\metsmoicaici{14}{18}{HL}\metsmoicaici{13}{18}{E}\metsmoicaici{12}{18}{H}\metsmoicaici{11}{18}{IK}\metsmoicaici{10}{18}{DHL}\metsmoicaici{5}{18}{K}\metsmoicaici{4}{18}{H}\metsmoicaici{3}{18}{IK}\metsmoicaici{2}{18}{D}\metsmoicaici{1}{18}{E}\metsmoicaici{0}{18}{L}
\metsmoicaici{35}{17}{K}\metsmoicaici{34}{17}{H}\metsmoicaici{33}{17}{I}\metsmoicaici{32}{17}{DH}\metsmoicaici{31}{17}{EIK}\metsmoicaici{30}{17}{DL}\metsmoicaici{29}{17}{EK}\metsmoicaici{27}{17}{EI}\metsmoicaici{26}{17}{DHL}\metsmoicaici{23}{17}{EIK}\metsmoicaici{22}{17}{DL}\metsmoicaici{21}{17}{EK}\metsmoicaici{15}{17}{EIK}\metsmoicaici{14}{17}{DL}\metsmoicaici{13}{17}{EK}\metsmoicaici{11}{17}{EI}\metsmoicaici{10}{17}{DHL}\metsmoicaici{3}{17}{K}\metsmoicaici{2}{17}{H}\metsmoicaici{1}{17}{I}\metsmoicaici{0}{17}{DH}
\metsmoicaici{33}{16}{K}\metsmoicaici{32}{16}{H}\metsmoicaici{31}{16}{IK}\metsmoicaici{30}{16}{D}\metsmoicaici{29}{16}{EK}\metsmoicaici{28}{16}{HL}\metsmoicaici{27}{16}{IK}\metsmoicaici{26}{16}{DHL}\metsmoicaici{24}{16}{HL}\metsmoicaici{23}{16}{IK}\metsmoicaici{22}{16}{D}\metsmoicaici{21}{16}{EK}\metsmoicaici{16}{16}{HL}\metsmoicaici{15}{16}{IK}\metsmoicaici{14}{16}{D}\metsmoicaici{13}{16}{EK}\metsmoicaici{12}{16}{HL}\metsmoicaici{11}{16}{IK}\metsmoicaici{10}{16}{DHL}\metsmoicaici{1}{16}{K}\metsmoicaici{0}{16}{L}
\metsmoicaici{31}{15}{K}\metsmoicaici{30}{15}{H}\metsmoicaici{29}{15}{I}\metsmoicaici{28}{15}{DH}\metsmoicaici{27}{15}{EI}\metsmoicaici{26}{15}{DHL}\metsmoicaici{25}{15}{EIK}\metsmoicaici{24}{15}{DH}\metsmoicaici{23}{15}{K}\metsmoicaici{22}{15}{H}\metsmoicaici{21}{15}{EK}\metsmoicaici{17}{15}{EIK}\metsmoicaici{16}{15}{DH}\metsmoicaici{15}{15}{K}\metsmoicaici{14}{15}{H}\metsmoicaici{13}{15}{I}\metsmoicaici{12}{15}{DH}\metsmoicaici{11}{15}{EI}\metsmoicaici{10}{15}{DHL}\metsmoicaici{1}{15}{EIK}\metsmoicaici{0}{15}{DH}
\metsmoicaici{29}{14}{K}\metsmoicaici{28}{14}{H}\metsmoicaici{27}{14}{IK}\metsmoicaici{26}{14}{D}\metsmoicaici{25}{14}{E}\metsmoicaici{24}{14}{L}\metsmoicaici{22}{14}{HL}\metsmoicaici{21}{14}{EK}\metsmoicaici{18}{14}{HL}\metsmoicaici{17}{14}{E}\metsmoicaici{16}{14}{L}\metsmoicaici{13}{14}{K}\metsmoicaici{12}{14}{H}\metsmoicaici{11}{14}{IK}\metsmoicaici{10}{14}{DHL}\metsmoicaici{2}{14}{HL}\metsmoicaici{1}{14}{E}\metsmoicaici{0}{14}{L}
\metsmoicaici{27}{13}{K}\metsmoicaici{26}{13}{H}\metsmoicaici{25}{13}{I}\metsmoicaici{24}{13}{DH}\metsmoicaici{23}{13}{EIK}\metsmoicaici{22}{13}{DL}\metsmoicaici{21}{13}{EK}\metsmoicaici{19}{13}{EIK}\metsmoicaici{18}{13}{DL}\metsmoicaici{17}{13}{I}\metsmoicaici{16}{13}{DH}\metsmoicaici{11}{13}{EI}\metsmoicaici{10}{13}{DHL}\metsmoicaici{3}{13}{EIK}\metsmoicaici{2}{13}{DL}\metsmoicaici{1}{13}{I}\metsmoicaici{0}{13}{DH}
\metsmoicaici{25}{12}{K}\metsmoicaici{24}{12}{H}\metsmoicaici{23}{12}{IK}\metsmoicaici{22}{12}{D}\metsmoicaici{21}{12}{EK}\metsmoicaici{20}{12}{HL}\metsmoicaici{19}{12}{IK}\metsmoicaici{18}{12}{DHL}\metsmoicaici{17}{12}{K}\metsmoicaici{16}{12}{L}\metsmoicaici{12}{12}{HL}\metsmoicaici{11}{12}{IK}\metsmoicaici{10}{12}{DHL}\metsmoicaici{4}{12}{HL}\metsmoicaici{3}{12}{IK}\metsmoicaici{2}{12}{DHL}\metsmoicaici{1}{12}{K}\metsmoicaici{0}{12}{L}
\metsmoicaici{23}{11}{K}\metsmoicaici{22}{11}{H}\metsmoicaici{21}{11}{I}\metsmoicaici{20}{11}{DH}\metsmoicaici{19}{11}{EI}\metsmoicaici{18}{11}{DHL}\metsmoicaici{17}{11}{EIK}\metsmoicaici{16}{11}{DH}\metsmoicaici{13}{11}{EIK}\metsmoicaici{12}{11}{DH}\metsmoicaici{11}{11}{EI}\metsmoicaici{10}{11}{DHL}\metsmoicaici{5}{11}{EIK}\metsmoicaici{4}{11}{DH}\metsmoicaici{3}{11}{EI}\metsmoicaici{2}{11}{DHL}\metsmoicaici{1}{11}{EIK}\metsmoicaici{0}{11}{DH}
\metsmoicaici{21}{10}{K}\metsmoicaici{20}{10}{H}\metsmoicaici{19}{10}{IK}\metsmoicaici{18}{10}{D}\metsmoicaici{17}{10}{E}\metsmoicaici{16}{10}{L}\metsmoicaici{14}{10}{HL}\metsmoicaici{13}{10}{E}\metsmoicaici{12}{10}{H}\metsmoicaici{11}{10}{IK}\metsmoicaici{10}{10}{DHL}\metsmoicaici{6}{10}{HL}\metsmoicaici{5}{10}{E}\metsmoicaici{4}{10}{H}\metsmoicaici{3}{10}{IK}\metsmoicaici{2}{10}{D}\metsmoicaici{1}{10}{E}\metsmoicaici{0}{10}{L}
\metsmoicaici{19}{9}{K}\metsmoicaici{18}{9}{H}\metsmoicaici{17}{9}{I}\metsmoicaici{16}{9}{DH}\metsmoicaici{15}{9}{EIK}\metsmoicaici{14}{9}{DL}\metsmoicaici{13}{9}{EK}\metsmoicaici{11}{9}{EI}\metsmoicaici{10}{9}{DHL}\metsmoicaici{7}{9}{EIK}\metsmoicaici{6}{9}{DL}\metsmoicaici{5}{9}{EK}\metsmoicaici{3}{9}{K}\metsmoicaici{2}{9}{H}\metsmoicaici{1}{9}{I}\metsmoicaici{0}{9}{DH}
\metsmoicaici{17}{8}{K}\metsmoicaici{16}{8}{H}\metsmoicaici{15}{8}{IK}\metsmoicaici{14}{8}{D}\metsmoicaici{13}{8}{EK}\metsmoicaici{12}{8}{HL}\metsmoicaici{11}{8}{IK}\metsmoicaici{10}{8}{DHL}\metsmoicaici{8}{8}{HL}\metsmoicaici{7}{8}{IK}\metsmoicaici{6}{8}{D}\metsmoicaici{5}{8}{EK}\metsmoicaici{1}{8}{K}\metsmoicaici{0}{8}{L}
\metsmoicaici{15}{7}{K}\metsmoicaici{14}{7}{H}\metsmoicaici{13}{7}{I}\metsmoicaici{12}{7}{DH}\metsmoicaici{11}{7}{EI}\metsmoicaici{10}{7}{DHL}\metsmoicaici{9}{7}{EIK}\metsmoicaici{8}{7}{DH}\metsmoicaici{7}{7}{K}\metsmoicaici{6}{7}{H}\metsmoicaici{5}{7}{EK}\metsmoicaici{1}{7}{EIK}\metsmoicaici{0}{7}{DH}
\metsmoicaici{13}{6}{K}\metsmoicaici{12}{6}{H}\metsmoicaici{11}{6}{IK}\metsmoicaici{10}{6}{D}\metsmoicaici{9}{6}{E}\metsmoicaici{8}{6}{L}\metsmoicaici{6}{6}{HL}\metsmoicaici{5}{6}{EK}\metsmoicaici{2}{6}{HL}\metsmoicaici{1}{6}{E}\metsmoicaici{0}{6}{L}
\metsmoicaici{11}{5}{K}\metsmoicaici{10}{5}{H}\metsmoicaici{9}{5}{I}\metsmoicaici{8}{5}{DH}\metsmoicaici{7}{5}{EIK}\metsmoicaici{6}{5}{DL}\metsmoicaici{5}{5}{EK}\metsmoicaici{3}{5}{EIK}\metsmoicaici{2}{5}{DL}\metsmoicaici{1}{5}{I}\metsmoicaici{0}{5}{DH}
\metsmoicaici{9}{4}{K}\metsmoicaici{8}{4}{H}\metsmoicaici{7}{4}{IK}\metsmoicaici{6}{4}{D}\metsmoicaici{5}{4}{EK}\metsmoicaici{4}{4}{HL}\metsmoicaici{3}{4}{IK}\metsmoicaici{2}{4}{DHL}\metsmoicaici{1}{4}{K}\metsmoicaici{0}{4}{L}
\metsmoicaici{7}{3}{K}\metsmoicaici{6}{3}{H}\metsmoicaici{5}{3}{I}\metsmoicaici{4}{3}{DH}\metsmoicaici{3}{3}{EI}\metsmoicaici{2}{3}{DHL}\metsmoicaici{1}{3}{EIK}\metsmoicaici{0}{3}{DH}
\metsmoicaici{5}{2}{K}\metsmoicaici{4}{2}{H}\metsmoicaici{3}{2}{IK}\metsmoicaici{2}{2}{D}\metsmoicaici{1}{2}{E}\metsmoicaici{0}{2}{L}
\metsmoicaici{3}{1}{K}\metsmoicaici{2}{1}{H}\metsmoicaici{1}{1}{I}\metsmoicaici{0}{1}{DH}
\metsmoicaici{1}{0}{K}\metsmoicaici{0}{0}{L}

  \end{tikzpicture}
\end{center}
\caption{Fifth step of $\Gamma^{-1}$'s substitution system (time goes from left to right).\label{fig:omega_steps}}
\end{figure}

\subsubsection{Final arguments}
\label{sub:final}

It now remains to be proven that Figure~\ref{fig:gamma} does represent $X_2$ for $\Gamma$ and $\Gamma^{-1}$.  As such, this does not mean much; actually, we need to prove a few features of $X_2$ that would suffice in order to conclude that $\Gamma$ does not simulate $\Gamma^{-1}$.  Namely, we want to justify this series of assertions:

\begin{itemize}
\item[(i)] $\overline{X_2(\Gamma)}$ contains the (half-)lines $\R_+(0,1)$ and $\R_+(1,1)$.
\item[(ii)] $\overline{X_2(\Gamma)}$ contains the segment $\left[(0,1);(\frac{2}{3},\frac{2}{3})\right]$.
\item[(iii)] No point of $X_2(\Gamma^{-1})$ lies in the interior of the triangle with vertices $(0,1)$, $(\frac{2}{3},\frac{1}{3})$ and $(\frac{2}{3},\frac{4}{3})$. 
\end{itemize}

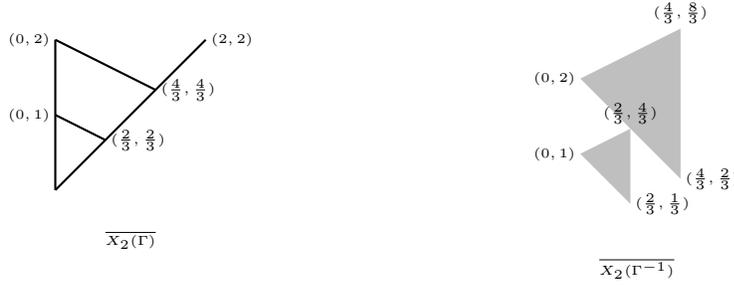
\begin{figure}
  \centering
  \begin{minipage}{.45\linewidth}
    \begin{center}
      \tiny
      \begin{tikzpicture}
        \draw[thick] (0,0)--(0,2);
        \draw[thick] (0,0)--(2,2) node [right] {$(2,2)$};
        \draw[thick] (0,1) node [left] {$(0,1)$} --(.666,.666) node [right] {$(\frac{2}{3},\frac{2}{3})$} ;
        \draw[thick] (0,2) node [left] {$(0,2)$} --(1.333,1.333) node [right] {$(\frac{4}{3},\frac{4}{3})$} ;
      \end{tikzpicture}
      \vskip .5cm
      $\overline{X_2(\Gamma)}$
    \end{center}
  \end{minipage}
  \begin{minipage}{.45\linewidth}
    \begin{center}
      \tiny
      \begin{tikzpicture}
        \fill[fill=gray!50!white] (0,2) node [left] {$(0,2)$} -- 
        (1.333,0.666) node [right] {$(\frac{4}{3},\frac{2}{3})$} -- 
        (1.333,2.666) node [above] {$(\frac{4}{3},\frac{8}{3})$} -- cycle;
        \fill[fill=gray!50!white] (0,1) node [left] {$(0,1)$} -- 
        (0.666,0.333) node [right] {$(\frac{2}{3},\frac{1}{3})$} -- 
        (0.666,1.333) node [above] {$(\frac{2}{3},\frac{4}{3})$} -- cycle;
      \end{tikzpicture}
      \vskip .5cm
      $\overline{X_2(\Gamma^{-1})}$
    \end{center}
  \end{minipage}
  \caption{Partial knowledge about $\overline{X_2(\Gamma)}$ and $\overline{X_2(\Gamma^{-1})}$. Points in black are known to belong to the set while points in gray are known not to belong to the set. Remember that $X_2$ is invariant by homothetic transformations of center $(0,0)$ and factor $2^i$ (with $i$ any integer).}
  \label{fig:assertions}
\end{figure}

This is enough to conclude, because (iii) implies that the only possible half-lines starting at the origin and included in $\overline{X_2(\Gamma^{-1})}$ are the vertical axis and that of slope $\frac{1}{2}$; and the segments joining these lines, if they exist, must have slope $-1$. Therefore it is impossible to send $X_2(\Gamma)$ into $X_2(\Gamma^{-1})$ by a $\pi_{\alpha,\beta,\gamma}$ transformation (see figure~\ref{fig:assertions}) and Theorem~\ref{thm:simu} concludes. Since $X_2({\tilde \Gamma})$ is just the symmetric of $X_2(\Gamma^{-1})$ with respect to the vertical axis passing through $(0,0)$, the same reasoning with Theorem~\ref{thm:simu} shows that $\Gamma$ cannot simulate ${\tilde \Gamma}$.

\begin{prop}
  $\Gamma$ can neither $\simu$-simulate its inverse $\Gamma^{-1}$ nor its dual ${\tilde \Gamma}$.
\end{prop}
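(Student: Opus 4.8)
The plan is to treat Theorem~\ref{thm:simu} as a necessary condition and then exhibit incompatible geometric features of $\overline{X_2(\Gamma)}$ and $\overline{X_2(\Gamma^{-1})}$. Suppose, for contradiction, that $\Gamma\simu\Gamma^{-1}$. Then there are rationals $\beta$ and $\alpha,\gamma>0$ with $\pi_{\alpha,\beta,\gamma}(\overline{X_2(\Gamma)})\subseteq\overline{X_2(\Gamma^{-1})}$, where $\pi_{\alpha,\beta,\gamma}(x,y)=(\alpha x+\beta y,\gamma y)$ is linear and injective (its determinant is $\alpha\gamma\neq 0$). The whole argument reduces to establishing the three assertions (i), (ii), (iii) stated above and checking that no such $\pi_{\alpha,\beta,\gamma}$ can respect them.

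For the positive assertions (i) and (ii) I would read concrete membership facts off the substitution system of Section~\ref{sec:gamma}. The (simplified) state $D$ is exactly the one whose Green function is the identity matrix, hence a bijection; so whenever the substitution forces an occurrence of $D$ carrying arbitrarily many blank cells on both sides, its tip is a point of $X_2(\Gamma)$. Inspecting the transition graph of Figure~\ref{fig:graphe}, from every reachable state other than $0$ and $BD$ there is a path to $D$; by iterating the substitution together with Lemma~\ref{lemma:xx'}, this deposits a point of $X_2(\Gamma)$ in each corresponding square. Following the self-loop at $D$ gives $(0,1)\in X_2(\Gamma)$, and by scale invariance of $X_2$ and the tipping property of Lemma~\ref{lemma:xx'} the entire ray $\R_+(0,1)$ lies in the closure; following the images of $D$ along the diagonal frontier yields the points $(1,1+2^{-k})\in X_2(\Gamma)$ for all $k$, whose limit is $(1,1)$, giving the ray $\R_+(1,1)$ and proving (i). Assertion (ii) follows from the same path-to-$D$ mechanism applied to the cells along the edge joining these two extremal rays: every such cell is a non-$0$, non-$BD$ reachable state, so it reaches $D$, and refining the scale shows $X_2(\Gamma)$ is dense along the segment $[(0,1);(\frac{2}{3},\frac{2}{3})]$, whose closure therefore contains the whole segment.

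The crux, and the step I expect to be the main obstacle, is the negative assertion (iii): no point of $X_2(\Gamma^{-1})$ lies in the open triangle with vertices $(0,1)$, $(\frac{2}{3},\frac{1}{3})$, $(\frac{2}{3},\frac{4}{3})$. Here there is no single witness to produce; instead I must rule out the Spot property throughout a two-dimensional region, which forces a global analysis of the substitution system for $\Omega=\Gamma^{-1}$ built in the previous subsection. The argument is to identify, via the explicit substitution of Figure~\ref{fig:omega_steps} and its transition structure, which states can eventually yield an isolated bijective Green function flanked by blanks on both sides (the analogue of $D$), and to show that no state reachable inside the dyadic approximation of this triangle has this capacity under any further iteration. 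This is a finite but delicate verification, and it is where the work genuinely lies.

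Finally I would assemble the pieces. Assertion (iii), combined with the scale invariance of $X_2$, implies that the only rays through the origin contained in $\overline{X_2(\Gamma^{-1})}$ and lying weakly between the vertical axis and slope $\frac{1}{2}$ are those two rays themselves, and that any segment joining them must have slope $-1$. Since $\pi_{\alpha,\beta,\gamma}$ is linear and injective, it sends the two distinct rays $\R_+(0,1)$ and $\R_+(1,1)$ of (i) to two distinct admissible rays; the assignment forcing $\R_+(1,1)\mapsto$ the vertical axis would give $\alpha=-2\gamma<0$, which is excluded, so necessarily $\pi(0,1)$ is vertical and $\pi(1,1)$ has slope $\frac12$, whence $\beta=0$ and $\alpha=2\gamma$. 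With these values the image of the segment of (ii) joins the vertical axis to the slope-$\frac12$ line but with a slope different from the mandatory $-1$, so it passes through the interior of a rescaled copy of the forbidden triangle ruled out by (iii); this is the contradiction, giving $\Gamma\not\simu\Gamma^{-1}$. For the dual, $X_2(\tilde\Gamma)$ is the reflection of $X_2(\Gamma^{-1})$ across the vertical axis, so (iii) transfers verbatim after reflection and the identical computation (now pinning $\alpha=2\gamma$, $\beta=-2\gamma$) produces the same clash of slopes, yielding $\Gamma\not\simu\tilde\Gamma$.
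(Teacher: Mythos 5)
Your strategy coincides with the paper's own proof: the same three assertions (i)--(iii), the same use of the substitution systems, the same reading of the transition graph of Figure~\ref{fig:graphe} together with Lemma~\ref{lemma:xx'} for the positive assertions, and the same final geometric assembly; your treatment of the dual (pinning $\alpha=2\gamma$, $\beta=-2\gamma$ and re-running the slope clash) is in fact more explicit than the paper's one-line ``same reasoning'' remark. Assertion (i) is handled adequately. For (ii), however, you assert rather than prove the key point: ``refining the scale shows $X_2(\Gamma)$ is dense along the segment'' is exactly what has to be established, since reachability of $D$ from each cell only helps if the chain of non-$0$, non-$BD$ cells joining the two rays is present at \emph{every} scale. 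The paper proves this by exhibiting a specific five-cell pattern built from $G$, $CF$, $BD$, $BDG$ and checking that the substitution maps it to a configuration containing the same pattern, so the discrete slope $-\frac{1}{2}$ segment reproduces itself under substitution.

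The genuine gap is assertion (iii). You correctly identify it as the crux and then leave it as ``a finite but delicate verification \ldots where the work genuinely lies'', which is a declaration of intent, not a proof: a priori, excluding Spot parameters throughout an open two-dimensional region at all scales $2^n$ is an infinite task, and nothing in your proposal reduces it to a finite one. Worse, the method you sketch --- classify the states that can eventually produce an isolated bijective Green function, then show no such state is ``reachable inside the dyadic approximation of this triangle'' --- is circular as stated, because knowing which states occur inside the triangle at arbitrarily large scales is precisely the difficulty. The paper's missing idea is a periodicity invariant: it proves by induction that, under the substitution system for $\Omega=\Gamma^{-1}$, the content of the triangle alternates between the two explicit configurations of Figures~\ref{fig:odd_step} and~\ref{fig:even_step}, each of which substitutes into the other (a mechanical check readable from Figure~\ref{fig:omega_steps}); this period-two self-reproduction is what makes the interior provably free of nonzero cells at every scale and turns (iii) into a finite verification. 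Without an invariant of this kind, assertion (iii) --- and with it the whole proposition --- remains unproven.
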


We now prove the three assertions above successively using the substitution systems derived earlier.

\begin{ppty}
  $\overline{X_2(\Gamma)}$ contains the (half-)lines $\R_+(0,1)$ and $\R_+(1,1)$.
\end{ppty}
\begin{proof}
  First, by looking at the images of $D$, $G$, $B$ and $F$ by the substitution system of $\Gamma$, we prove by recurrence that:
  \begin{itemize}
  \item $\Upsilon(0,n)=D$ if $n$ is even and $G$ else, and
  \item $\Upsilon(n,n)=B$ if $n\geq 1$ is even and $F$ else,
  \end{itemize}
  where $\Upsilon$ is the fixed-point of the substitution.  We deduce from the former observation that every letter in the substitution system, except for $BD$, contains a point in $X_2$, that the (half-)lines $\R_+(0,1)$ and $\R_+(1,1)$ are in $\overline{X_2(\Gamma)}$.
\end{proof}

\begin{ppty}
  $\overline{X_2(\Gamma)}$ contains the segment $\left[(0,1);(\frac{2}{3},\frac{2}{3})\right]$.
\end{ppty}
\begin{proof}
The substitution system of $\Gamma$ is such that:
\begin{align*}
  BD &\rightarrow \begin{array}{|c|c|}\hline 0 & 0 \\ \hline BD & 0 \\ \hline \end{array}
  &CF &\rightarrow \begin{array}{|c|c|}\hline BD & 0 \\ \hline BDG & CF \\ \hline \end{array}\\
  BDG &\rightarrow \begin{array}{|c|c|}\hline G & CF \\ \hline B & C \\ \hline \end{array}
  &G &\rightarrow \begin{array}{|c|c|}\hline G & CF \\ \hline D & C \\ \hline \end{array}\\
\end{align*}
We deduce that any pattern of the form
\begin{center}\tiny
  \begin{tikzpicture}[node distance=.7cm,>=stealth',bend angle=30,auto,draw]
      \tikzstyle{nice}=[draw,minimum size=.7cm]%
      \node [nice] (a) {CF};%
      \node [nice] (b) [right of=a] {BD};%
      \node [nice] (c) [below of=b] {X};%
      \node [nice] (d) [right of=c] {CF};%
      \node [nice] (e) [right of=d] {BD};%
  \end{tikzpicture}
\end{center}
where X is either BDG or G, is sent to a pattern of the form
\begin{center}\tiny
  \begin{tikzpicture}[node distance=.7cm,>=stealth',bend angle=30,auto,draw]
      \tikzstyle{nice}=[draw,minimum size=.7cm]%
      \node [nice] (euh) {BD};%
      \node [nice] (you) [below of=euh] {BDG};%
      \node [nice] (a) [right of=you] {CF};%
      \node [nice] (b) [right of=a] {BD};%
      \node [nice] (c) [below of=b] {G};%
      \node [nice] (d) [right of=c] {CF};%
      \node [nice] (e) [right of=d] {BD};%
      \node [nice] (cr) [below of=e] {BDG};%
      \node [nice] (dr) [right of=cr] {CF};%
      \node [nice] (er) [right of=dr] {BD};%
  \end{tikzpicture}
\end{center}
Now, observing Figure~(\ref{fig:gamma_steps}), one can see a discrete segment of slope $-\frac{1}{2}$ made of the above pattern starting from the top-left position and reaching the upper-diagonal.  Since we know that all the cells appearing on this discrete segment, namely $G$, $CF$ and $BDG$ (plus an end point that is, depending on the parity of the scale, $B$ or $F$), contain a point of $X_2$, it just remains to show by recurrence that a segment of this form is present at every scale, which is immediate. 
\end{proof}

\begin{ppty}
  No point of $X_2(\Gamma^{-1})$ lies in the interior of the triangle of vertices $(0,\frac{1}{2})$, $(\frac{1}{3},\frac{1}{6})$ and $(\frac{1}{3},\frac{2}{3})$.
\end{ppty}
\begin{proof}
By induction, we can prove that depending on the parity of the step, this triangle takes alternatively the forms presented in Figures~\ref{fig:odd_step} and \ref{fig:even_step}, which represent the corresponding triangle in the substitution system, supposing the pair of initial blocks represents a rectangle of height $1$.  For instance, Figure~\ref{fig:omega_steps},  showing the fifth step, exhibits in this position a triangle of the form presented in Figure~\ref{fig:odd_step}.

\begin{figure}[htb]
\begin{center}\tiny
  \begin{tikzpicture}[node distance=.5cm,>=stealth',bend angle=30,auto,draw]
      \tikzstyle{nice}=[draw,minimum size=.5cm]%
      \node [nice] (euh) {L};%
      \node [nice] (you) [below of=euh] {DH};%
      \node [nice] (a) [right of=you] {EIK};%
      \node [nice] (b) [below of=a] {E};%
      \node [nice] (c) [right of=b] {HL};%
      \node [nice] (d) [below of=c] {DL};%
      \node [nice] (e) [right of=d] {EIK};%
      \node [nice] (cr) [below of=e] {IK};%
      \node [nice] (dr) [right of=cr] {HL};%
      \node [nice] (er) [below of=dr] {DH};%
      \node [nice] (a2) [right of=er] {EIK};%
      \node [nice] (b2) [below of=a2] {E};%
      \node [nice] (c2) [right of=b2] {HL};%
      \node [nice] (d2) [below of=c2] {DL};%
      \node [nice] (e2) [right of=d2] {EIK};%
      \node (S2points1) [below of=e2] {$\ddots$};%
      \node (fake1) [right of=S2points1] {$\ddots$};
      \node (S2points2) [below of=fake1] {$\ddots$};%
      \node (fake2) [right of=S2points2] {$\ddots$};
      \node [nice] (S2end1) [below of=fake2] {DH};%
      \node [nice] (S2end2) [right of=S2end1] {EIK};%
      \node [nice] (S2end3) [below of=S2end2] {E};%
      \node [nice] (S2end4) [right of=S2end3] {D};%
      \node [nice] (S2end5) [below of=S2end4] {H};%
      \node [nice] (S3E1) [above of=S2end4] {DHL};%
      \node [nice] (S3E2) [above of=S3E1] {DHL};%
      \node [nice] (S3E3) [above of=S3E2] {DHL};%
      \node [nice] (S3E4) [above of=S3E3] {DHL};%
      \node [nice] (S3E5) [above of=S3E4] {DHL};%
      \node (S3E6) [above of=S3E5] {\vdots};%
      \node (S3E7) [above of=S3E6] {\vdots};%
      \node (S3E8) [above of=S3E7] {\vdots};%
      \node (S3E9) [above of=S3E8] {\vdots};%
      \node (S3E10) [above of=S3E9] {\vdots};%
      \node (S3E11) [above of=S3E10] {\vdots};%
      \node (S3E12) [above of=S3E11] {\vdots};%
      \node (S3E13) [above of=S3E12] {\vdots};%

      \node [nice] (S1E2) [right of=euh] {K};%
      \node (invisible2) [above of=S1E2] {};
      \node [nice] (S1E3) [right of=invisible2] {H};%
      \node [nice] (S1E4) [right of=S1E3] {K};%
      \node (invisible3) [above of=S1E4] {};
      \node [nice] (S1E5) [right of=invisible3] {H};%
      \node [nice] (S1E6) [right of=S1E5] {K};%
      \node (invisible4) [above of=S1E6] {};
      \node [nice] (S1E7) [right of=invisible4] {H};%
      \node [nice] (S1E8) [right of=S1E7] {K};%
      \node (invisible5) [above of=S1E8] {};%
      \node (S1point1) [right of =invisible5] {\begin{rotate}{27}$\dots$\end{rotate}};
      \node (S1point2) [right of =S1point1] {\begin{rotate}{27}$\cdots$\end{rotate}};
      \node (invisible6) [above of=S1point2] {};%
      \node [nice] (S1E9) [right of=invisible6] {K};%
      \node [nice] (S1E10) [right of=S1E9] {DHL};%
      \node [nice] (S1E11) [above of=S1E10] {DL};%
      \node [nice] (S1E12) [below of=S1E10] {DHL};%
      \node [nice] (S1E13) [below of=S1E12] {DHL};%
  \end{tikzpicture}
\end{center}
\caption{Odd steps (time goes from bottom to top).\label{fig:odd_step}}
\end{figure}

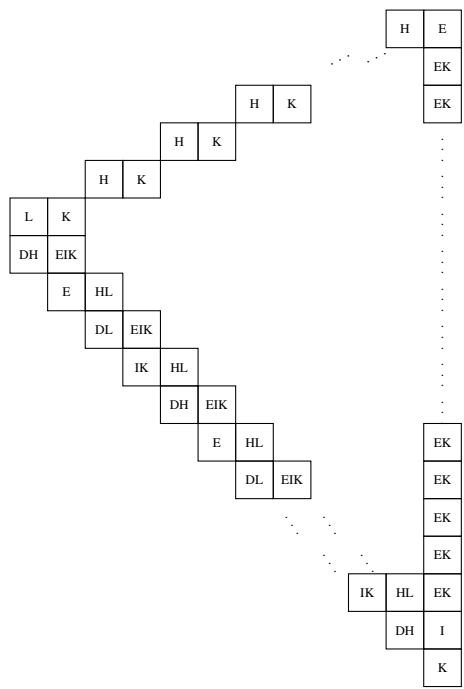
\begin{figure}[htb]
\begin{center}\tiny
  \begin{tikzpicture}[node distance=.5cm,>=stealth',bend angle=30,auto,draw]
      \tikzstyle{nice}=[draw,minimum size=.5cm]%
      \node [nice] (euh) {L};%
      \node [nice] (you) [below of=euh] {DH};%
      \node [nice] (a) [right of=you] {EIK};%
      \node [nice] (b) [below of=a] {E};%
      \node [nice] (c) [right of=b] {HL};%
      \node [nice] (d) [below of=c] {DL};%
      \node [nice] (e) [right of=d] {EIK};%
      \node [nice] (cr) [below of=e] {IK};%
      \node [nice] (dr) [right of=cr] {HL};%
      \node [nice] (er) [below of=dr] {DH};%
      \node [nice] (a2) [right of=er] {EIK};%
      \node [nice] (b2) [below of=a2] {E};%
      \node [nice] (c2) [right of=b2] {HL};%
      \node [nice] (d2) [below of=c2] {DL};%
      \node [nice] (e2) [right of=d2] {EIK};%
      \node (S2points1) [below of=e2] {$\ddots$};%
      \node (fake1) [right of=S2points1] {$\ddots$};
      \node (S2points2) [below of=fake1] {$\ddots$};%
      \node (fake2) [right of=S2points2] {$\ddots$};
      \node [nice] (S2end1) [below of=fake2] {IK};%
      \node [nice] (S2end2) [right of=S2end1] {HL};%
      \node [nice] (S2end3) [below of=S2end2] {DH};%
      \node [nice] (S2end4) [right of=S2end3] {I};%
      \node [nice] (S2end5) [below of=S2end4] {K};%
      \node [nice] (S3E1) [above of=S2end4] {EK};%
      \node [nice] (S3E2) [above of=S3E1] {EK};%
      \node [nice] (S3E3) [above of=S3E2] {EK};%
      \node [nice] (S3E4) [above of=S3E3] {EK};%
      \node [nice] (S3E5) [above of=S3E4] {EK};%
      \node (S3E6) [above of=S3E5] {\vdots};%
      \node (S3E7) [above of=S3E6] {\vdots};%
      \node (S3E8) [above of=S3E7] {\vdots};%
      \node (S3E9) [above of=S3E8] {\vdots};%
      \node (S3E10) [above of=S3E9] {\vdots};%
      \node (S3E11) [above of=S3E10] {\vdots};%
      \node (S3E12) [above of=S3E11] {\vdots};%
      \node (S3E13) [above of=S3E12] {\vdots};%

      \node [nice] (S1E2) [right of=euh] {K};%
      \node (invisible2) [above of=S1E2] {};
      \node [nice] (S1E3) [right of=invisible2] {H};%
      \node [nice] (S1E4) [right of=S1E3] {K};%
      \node (invisible3) [above of=S1E4] {};
      \node [nice] (S1E5) [right of=invisible3] {H};%
      \node [nice] (S1E6) [right of=S1E5] {K};%
      \node (invisible4) [above of=S1E6] {};
      \node [nice] (S1E7) [right of=invisible4] {H};%
      \node [nice] (S1E8) [right of=S1E7] {K};%
      \node (invisible5) [above of=S1E8] {};%
      \node (S1point1) [right of =invisible5] {\begin{rotate}{27}$\dots$\end{rotate}};
      \node (S1point2) [right of =S1point1] {\begin{rotate}{27}$\cdots$\end{rotate}};
      \node (invisible6) [above of=S1point2] {};%
      \node [nice] (S1E9) [right of=invisible6] {H};%
      \node [nice] (S1E10) [right of=S1E9] {E};%
      \node [nice] (S1E12) [below of=S1E10] {EK};%
      \node [nice] (S1E13) [below of=S1E12] {EK};%
  \end{tikzpicture}
\end{center}
\caption{Even steps (time goes from bottom to top).\label{fig:even_step}}
\end{figure}

The proof that each of the figures substitutes into the other one is purely mechanical, and essentially done by the very existence of Figure~\ref{fig:omega_steps}, where the first five steps of substitution are readable.
\end{proof}

\section{Discussion}
We gave a new necessary condition for the simulation of CA and applied it to solve a few open questions of the form `Does there exist a reversible CA that simulates such and such but not such and such?'.  Noticeably, each time we were able to answer this question, it was in the positive, which is one general reason why we would expect the same answer for other closely related questions of the same sort that remain open.

Our method is tailored to be applied to linear CA.  Their practical advantage is that much of the information is present in their spacetime diagram, and therefore easy to access and comprehend.  For instance, with our theorem in mind, a blink at Figure~\ref{fig:theta} is enough to suspect that $\Theta$ cannot simulate the identity.  It then remains to check rigorously that the pattern does represent $X_2$ accurately, but that part is purely mechanical, if a bit tedious.  Let us now finish with two questions.

Why did the authors resort to a $3\times 3$ matrix?  Couldn't they find anything simpler?  No, they could not.  Actually they conjecture that every $2\times 2$ matrix simulates its inverse, which interestingly enough reduces to deciding whether every matrix simulates its transpose.

Does there exist a CA that can simulate the identity, but not its inverse/dual?  The correct answer is `probably, and $\Gamma\times\id$ is a good candidate'.  However, our theorem is not really helpful in this case, since the $X_p$-s of this CA are trivial.  Hopefully some hybrid can be created by merging it with \cite[theorem 3.4]{bulking2} and made available to the masses in the future.

\bibliographystyle{alpha}
\bibliography{biblio}

\begin{thebibliography}{DMOT11b}

\bibitem[AN10]{block}
Pablo Arrighi and Vincent Nesme.
\newblock {The Block Neighborhood}.
\newblock In TUCS, editor, {\em {Proceedings of JAC 2010}}, pages 43--53,
  Turku, Finlande, December 2010.

\bibitem[DMOT11a]{bulking1}
Marianne Delorme, Jacques Mazoyer, Nicolas Ollinger, and Guillaume Theyssier.
\newblock Bulking i: An abstract theory of bulking.
\newblock {\em Theor. Comput. Sci.}, 412(30):3866--3880, 2011.

\bibitem[DMOT11b]{bulking2}
Marianne Delorme, Jacques Mazoyer, Nicolas Ollinger, and Guillaume Theyssier.
\newblock Bulking ii: Classifications of cellular automata.
\newblock {\em Theor. Comput. Sci.}, 412(30):3881--3905, 2011.

\bibitem[Gil87]{Gilman:1987:CLA}
Robert~H. Gilman.
\newblock Classes of linear automata.
\newblock {\em Ergodic Theory and Dynamical Systems}, 7(1):105--118, 1987.

\bibitem[GNW10]{fractal}
Johannes G{\"u}tschow, Vincent Nesme, and Reinhard~F. Werner.
\newblock The fractal structure of cellular automata on abelian groups.
\newblock In {\em Proceedings of Automata 2010}, pages 55--74, June 2010.
\newblock Preprint: \url{http://arxiv.org/abs/1011.0313}.

\bibitem[K{\r{u}}r97]{Kurka97}
Petr K{\r{u}}rka.
\newblock Languages, equicontinuity and attractors in cellular automata.
\newblock {\em Ergodic Theory and Dynamical Systems}, 17:417--433, 1997.

\bibitem[MG10]{gamo}
Andr{\'e}s Moreira and Anah\'{\i} Gajardo.
\newblock Time-symmetric cellular automata.
\newblock In {\em JAC}, 2010.

\bibitem[Moo98]{moore1998}
Cristopher Moore.
\newblock Predicting nonlinear cellular automata quickly by decomposing them
  into linear ones.
\newblock {\em Physica D: Nonlinear Phenomena}, 111(1-4):27--41, 1998.

\bibitem[MR98]{rap}
Jacques Mazoyer and Ivan Rapaport.
\newblock Inducing an order on cellular automata by a grouping operation.
\newblock In {\em Proceedings of STACS}, pages 116--127, 1998.

\bibitem[Oll02]{ollingerphd}
Nicolas Ollinger.
\newblock {\em Automates Cellulaires : structures}.
\newblock PhD thesis, \'Ecole Normale Sup\'erieure de Lyon, d\'ecembre 2002.

\bibitem[Oll08]{surveyOllinger}
Nicolas Ollinger.
\newblock Universalities in cellular automata: a (short) survey.
\newblock In B.~Durand, editor, {\em Symposium on Cellular Automata Journ\'ees
  Automates Cellulaires (JAC'08)}, pages 102--118. MCCME Publishing House,
  Moscow, 2008.

\bibitem[Wol84]{Wolfram:1984:CTCA}
Stephen Wolfram.
\newblock Computation theory of cellular automata.
\newblock {\em Communications in Mathematical Physics}, 96(1):15--57, 1984.

\end{thebibliography}

\end{document}